\newtheorem{theorem}{Theorem}
\DeclareMathOperator*{\argmin}{arg\,min}
\def\BibTeX{{\rm B\kern-.05em{\sc i\kern-.025em b}\kern-.08em
    T\kern-.1667em\lower.7ex\hbox{E}\kern-.125emX}}
\begin{document}

\title{Goal-Oriented Status Updating for Real-time Remote Inference over Networks with Two-Way~Delay}

\author{\IEEEauthorblockN{Çağrı Arı, Md Kamran Chowdhury Shisher, Yin Sun, and Elif Uysal}
\thanks{An earlier version of this paper was presented in part at the IEEE International Symposium on Information Theory (ISIT) 2024~\cite{ari2023goal}. \par
Çağrı Arı and Elif Uysal are with the Department of
Electrical and Electronics Engineering, Middle East Technical University (METU), 06800 Ankara, Turkiye (e-mail: ari.cagri@metu.edu.tr; uelif@metu.edu.tr). \par
Md Kamran Chowdhury Shisher is with the Elmore Family School of Electrical and Computer Engineering, Purdue University, West Lafayette, IN, 47907, USA (e-mail: mshisher@purdue.edu). \par
Yin Sun is with the Department of Electrical and Computer Engineering, Auburn University, Auburn, AL, 36849, USA (e-mail: yzs0078@auburn.edu). \par
This work was supported in part by grant no. 101122990-GO SPACE-ERC-2023-AdG and NSF grant CNS-2239677. \par
Çağrı Arı was also supported by Turk Telekom within the framework of the 5G and Beyond Joint Graduate Support Programme, coordinated by the Information and Communication Technologies Authority.}}

%\markboth{}%
%{Arı \MakeLowercase{\textit{et al.}}: Goal-Oriented Status Updating for Real-time Remote Inference over Networks with Two-Way~Delay}
\maketitle

\begin{abstract}
We study a setting where an intelligent model (e.g., a pre-trained neural network) infers the real-time value of a target signal using data samples transmitted from a remote source. The transmission scheduler decides (i) the freshness of packets, (ii) their length (i.e., the number of samples they contain), and (iii) when they should be transmitted. The freshness is quantified using the Age of Information (AoI), and the inference quality for a given packet length is a general function of AoI. Previous works assumed i.i.d. transmission delays with immediate feedback or were restricted to the case where inference performance degrades as the input data ages. Our formulation, in addition to capturing non-monotone age dependence, also covers Markovian delay on both forward and feedback links. We model this as an infinite-horizon average-cost Semi-Markov Decision Process. We obtain a closed-form solution that decides on (i) and (iii) for any constant packet length. The solution for when to transmit is an index-based threshold policy, where the index function is expressed in terms of the delay state and AoI at the receiver. In contrast, the freshness of the selected packet is a function of only the delay state. We then separately optimize the value of the constant packet length. Moreover, we also develop an index-based threshold policy for the time-variable packet length case, which allows a complexity reduction. In simulation results, we observe that our goal-oriented scheduler drops inference error down to one-sixth with respect to the age-based scheduling of unit-length packets.
\end{abstract}
\begin{IEEEkeywords}
Remote Inference, Goal-oriented Communication, Delay with Memory, Markovian Delay, Two-way Delay, Age of Information, and Dynamic Programming.
\end{IEEEkeywords}

\section{Introduction}
Today, many previously complex computational problems enjoy the aid of AI. Among these AI-aided approaches are intelligent models, e.g., neural networks. These models have played a key role in enabling certain technologies as digital twins, industrial robotics, and self-driving cars. However, these models rely on data to be delivered to them, which, in the remote computation context, depends on the network delivering the relevant data in a timely fashion. This is especially interesting in the case of unavoidable and time-varying transmission delays and is within the realm of Goal-oriented Networking for AI \cite{uysal2022semantic, uysal2024goal}.

An extreme example of monitoring over long-distance network connections is a spacecraft digital twin. Consider training a digital twin on Earth based on data coming from rovers on Mars or calibrating the virtual replica of the spacecraft/satellite using live telemetry data~\cite{digitaltwinsatelliteCommercial, digitaltwinassistedStorage}. Furthermore, in recent years, digital twins have also gained importance in the monitoring of processes in industrial environments, especially in hazardous or hard-to-reach environments or when it is impractical to be on site for monitoring.\footnote{A screenshot of the digital twin of the METU high voltage testing laboratory is available at \url{https://cng-eee.metu.edu.tr/wp-content/Digital_twin.pdf?_t=1726991660}. The digital twin can be monitored over the Internet, and device settings in the physical experiment configuration on site can be remotely changed on demand directly on the twin.} In this paper, we focus on \textit{goal-oriented} communication design for real-time estimation and inference performed by intelligent models~\cite{shisher2022does, shisher2024timely, shisher2023learning}, which are often essential for such applications. 

Goal-oriented communication has been recently introduced to address the scalability problem in next-generation massive machine-type communication networks. This concept emphasizes that networks can facilitate the effective accomplishment of tasks at the destination rather than merely addressing the transmission problem~\cite{shannon1949mathematical}, which aims to reliably transmit data produced by a source. Solving the effective communication problem with efficient use of network resources requires combining the data generation and transmission processes, ensuring that the samples most relevant to the computation at the destination are delivered in a timely manner~\cite{uysal2022semantic}. This implies that the communication link should select which samples to transmit based on the state of the network (e.g., the delay state). Recent efforts in the communication and control communities highlight the value of this approach in reducing communication requirements while maintaining application performance~\cite{voortman2023remote, sagduyu2023multi, merluzzi2022effective, talli2023semantic, peng2024online, li2025freshness}.

Age of Information (AoI) at time $t$, denoted by $\Delta(t)$, is defined in \cite{kaul2012real} as $\Delta(t) = t - U_{t}$, where $U_{t}$ is the generation time of the most recently delivered data packet. The application-layer real-time estimation and inference performance is mapped as a function of the AoI. This function is all that is required for the link layer to operate in a goal-oriented manner. This work builds upon the findings presented in a series of papers \cite{shisher2022does, sun2019sampling, shisher2024timely, shisher2023learning, sun2017update, sun2019sampling2, shisher2024monotonicity}, which first demonstrated the effectiveness of AoI function as a surrogate metric for enhancing timely communication for remote estimation and inference.

In this paper, we study a \textit{remote inference} problem, where an intelligent, pre-trained model at the receiver is utilized to infer the real-time value of a target signal, $Y_t$. This inference is based on the most recently received data packet, denoted as ${X_{t-\delta}^l = (V_{t-\delta}, V_{t-\delta-1}, \ldots, V_{t-\delta-l+1})}$, which has been transmitted from a remote location. The data packet $X_{t-\delta}^l$ comprises samples of the source signal $V_t$ at the transmitter. Here, $\delta$ represents the AoI of the freshest sample in the packet, and $l$ denotes the packet length, corresponding to the number of samples included in the packet.

The inference error is characterized as a function of both the AoI and the packet length. Recent studies \cite{shisher2022does, shisher2024timely} have demonstrated that, for various remote inference systems, the relationship between the inference error and AoI for a given packet length can be non-monotonic. This finding contradicts the conventional assumption that a packet with ${\text{AoI} = 0}$ necessarily yields better performance than one with ${\text{AoI} > 0}$. Consequently, rather than adopting the generate-at-will model~\cite{yates2015lazy, sun2017update}, we use the ``selection-from-buffer'' model proposed in~\cite{shisher2022does, shisher2024timely}, which enables the scheduler to choose either fresh or stale data samples from the buffer.

Such non-monotonicity between the inference error and AoI arises when the source signal, $V_t$, and the target signal, $Y_t$, exhibit a delayed relationship, such as $Y_t = f(V_{t-\Upsilon})$ \cite{shisher2024timely, shisher2021age, shisher2024monotonicity}. For instance, $\Upsilon$ may represent a combination of the communication delay between the controller and the actuator and the actuation delay in a physical networked control system. An example of this scenario is illustrated in the leader-follower robot setup in \cite[Fig.~2]{shisher2024timely}. Additionally, non-monotonicity also occurs when the relationship between $V_t$ and $Y_t$ exhibits periodicity, as shown in the temperature prediction example in \cite[Fig.~4]{shisher2024timely}. In such cases, when communication delays prevent the transmission of sufficiently fresh samples to the receiver, transmitting aged samples can exploit the periodicity to enhance performance.

The inference error is non-increasing with packet length for a given AoI~\cite{shisher2023learning}, as additional information can only improve the estimation accuracy. However, this improvement comes at the cost of longer transmission delays. In our communication design, we investigate the interplay between packet length and transmission delay, along with the possibly non-monotonic relationship between inference error and AoI.

Finally, next-generation communication networks are expected to accommodate numerous connections, resulting in multiple routes between any pair of nodes. The relays in such networks may enforce specific routes for certain data flows, which can dynamically change over time to address the varying demands of multiple flows. For instance, consider a network involving both terrestrial and non-terrestrial connections. A flow may be serviced entirely by non-terrestrial connections, terrestrial connections, or a combination of both, depending on factors such as the flow's priority, network congestion, or satellite availability. Therefore, goal-oriented communication design must be adaptable to delay conditions that vary significantly with memory.

To that end, the technical contributions of this paper are:
\begin{itemize}
    \item We extend the system models in~\cite{shisher2024timely, shisher2023learning} to more practical scenarios by considering Markovian delay on both forward and feedback links. Based on this extension, we formulate and solve a learning and communication co-design problem for real-time remote inference. We consider both time-invariant and time-variable packet length selection to address the varying computational capabilities of practical systems. The derived optimal packet selection and scheduling policies minimize the time-average inference error for a given, possibly non-monotonic, inference error function corresponding to a particular remote inference application, thereby making the communication \textit{goal-oriented}.
    \item We formulate the learning and communication co-design problem under time-invariant packet length selection as a two-layer nested optimization problem. The inner layer determines the freshness and transmission times of data packets for a given constant packet length~$l$ and is modeled as an infinite-horizon average-cost Semi-Markov Decision Process (SMDP). Such problems are often solved using dynamic programming \cite{puterman2014markov, bertsekasdynamic} and typically do not have closed-form solutions. However, we derive a closed-form solution to this inner layer problem, presented in Theorem~\ref{Theorem 1}. The solution consists of two main parts: (i) the freshness of the data packets to be transmitted is determined by a stationary function of the network state, i.e., the delay state, in the previous epoch, and (ii) the transmission instants are controlled by an index-based threshold policy. The index function is expressed in terms of the delay state of the network in the previous epoch and the AoI at the receiver. We then separately optimize the value of the constant packet length~$l$ in the outer layer.
    \item We model the learning and communication co-design problem under time-variable packet length selection directly as a single infinite-horizon average-cost SMDP. We then formulate a Bellman optimality equation and derive a structural result regarding the optimal solution. Motivated by this result, we present a simplified version of the Bellman optimality equation in Theorem \ref{Theorem 2}. Solving the simplified Bellman equation using dynamic programming has significantly lower time complexity compared to solving the original Bellman optimality equation.
    \item We conduct two experiments to evaluate the performance of our optimal scheduling policies: (i) remote inference of auto-regressive (AR) processes, offering a model-based evaluation; and (ii) cart-pole state prediction, providing a trace-driven evaluation. In simulation results, we observe that our goal-oriented scheduler drops inference error down to one-sixth with respect to the age-based scheduling of unit-length packets.
\end{itemize}

\subsection{Related Work}
The concept of Age of Information (AoI) has attracted significant research interest; see, e.g., \cite{shisher2021age,kaul2012real,sun2017update, yates2015lazy, kadota2018optimizing, sun2019sampling2, ornee2021sampling, tripathi2019whittle, klugel2019aoi, sun2019closed, ornee2023whittle, pan2023sampling, ornee2023context, shisher2023learning, shisher2024monotonicity, shisher2024timely, shisher2022does, bedewy2020optimizing, bedewy2021optimal} and a recent survey \cite{yates2021age}. Initially, research efforts were centered on analyzing and optimizing the average AoI and peak AoI in communication networks \cite{kaul2012real, sun2017update, yates2015lazy, kadota2018optimizing}. Recent research endeavors have revealed that the performance of real-time applications can be modeled as non-linear functions of AoI, leading to the study of optimizing these non-linear functions in control system scenarios~\cite{klugel2019aoi, soleymani2019stochastic}, remote estimation~\cite{sun2019sampling, ornee2021sampling, pan2023sampling}, and remote inference~\cite{shisher2021age, shisher2022does, shisher2023learning, shisher2024timely}. 

While many studies have analyzed AoI in queuing models, closest to the spirit of this paper is the control of AoI via replacement of exogenous data arrivals with the generation of data "at will" \cite{bacinoglu2015age, sun2017update, kadota2018scheduling, kadota2019scheduling, sun2019sampling2, tripathi2019whittle, klugel2019aoi, sun2019closed}. A generalization of this approach is to incorporate jointly optimal sampling and scheduling policies to control not only AoI but a more sophisticated end-to-end distortion criterion by using AoI as an auxiliary parameter \cite{sun2019sampling, ornee2021sampling,  bedewy2020optimizing, shisher2023learning, shisher2022does, shisher2024timely}. While strikingly more demanding of analysis, these formulations take us closer to goal-oriented communication design.

Almost all previous studies on the ``generate-at-will'' model adopted an assumption that the penalty of information aging is a non-decreasing function of the AoI \cite{sun2019sampling, ornee2021sampling, sun2017update, kadota2018scheduling, kadota2019scheduling, sun2019sampling2, tripathi2019whittle, klugel2019aoi, sun2019closed, bedewy2021optimal, bedewy2020optimizing}. However, it was shown in \cite{shisher2021age, shisher2022does, shisher2023learning, shisher2024monotonicity, shisher2024timely} that the monotonicity of information aging depends heavily on the divergence of the time-series data from being a Markov chain. If the input and target data sequences in a system can be closely approximated as a Markov chain, then the penalty increases as the AoI grows; otherwise, if the data sequence significantly deviates from a Markovian structure, the monotonicity assumption does not hold. Following the approach in \cite{shisher2021age, shisher2022does, shisher2023learning, shisher2024monotonicity, shisher2024timely}, this paper models the inference error as a possibly non-monotonic function of AoI.

The works most closely related to this paper are~\cite{shisher2024timely, shisher2023learning}, which developed scheduling policies for remote inference, considering a possibly non-monotonic dependency between AoI and practical performance. While \cite{shisher2024timely} developed scheduling policies for fixed packet length, \cite{shisher2023learning} jointly optimized packet length and scheduling strategies. However, both studies~\cite{shisher2024timely, shisher2023learning} assumed random i.i.d. delay for packet transmissions and immediate feedback. This paper extends the problem formulations in~\cite{shisher2024timely, shisher2023learning} to more practical scenarios and develops a \textit{goal-oriented} communication strategy that minimizes the average inference error for remote inference under two-way delay (i.e., incorporating random feedback delay~\cite{tsai2021unifying, pan2022optimal}) that varies significantly with memory.

The earlier version of this paper~\cite{ari2023goal} is a special case of the learning and communication co-design problem under time-invariant packet length selection discussed in Section~\ref{sec_time_inv}. In \cite{ari2023goal}, the scheduler is restricted to transmitting a single sample at each transmission, i.e., the packet length is fixed to $l=1$. In contrast, the current paper allows the scheduler to form packets comprising multiple consecutive samples for transmission. Packets containing a larger number of samples can yield better inference performance. However, the associated increase in transmission delay makes such packets staler upon arrival, which may eventually degrade inference performance. As a result, this paper captures the trade-off between the volume of data transmitted and the resulting increase in transmission delay while optimizing the task performance at the destination. 

Furthermore, \cite{ari2023goal} established the optimal threshold-based waiting-time rule for the time-invariant packet length $l=1$ and showed that the freshness of the data packets to be transmitted is a function only of the network delay state in the previous epoch. The present paper extends these results by generalizing the waiting-time rule to arbitrary time-invariant packet lengths~$l$ (not limited to $l=1$) and by deriving an exact closed-form expression for the freshness of the data packets to be transmitted. Finally, this paper also investigates scenarios where the packet length can be adjusted over time based on the state of the scheduler in Section~\ref{sec_4}.

\section{System Model} \label{sys_model}

\begin{figure}[t]
\centerline{\includegraphics[width=.51\textwidth]{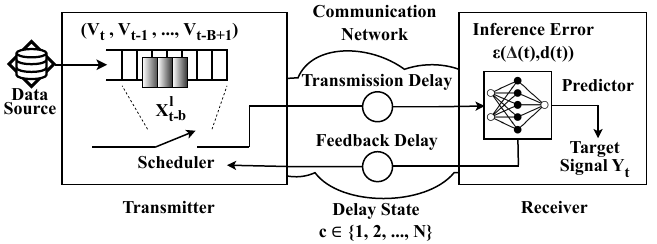}}
\caption{A remote inference system that adopts the selection-from-buffer medium access model \cite{shisher2024timely, shisher2022does}.}
\label{System_Model}
\end{figure}

We consider a time-slotted remote inference system, which comprises a data source, a transmitter, and a receiver, as illustrated in Fig.~\ref{System_Model}. On the transmitter side, the source signal $V_t$ is regularly sampled in each time slot $t$. The samples are stored in a buffer containing the most recent $B$ samples $\big(V_{t}, V_{t-1}, \ldots, V_{t-B+1}\big)$, waiting to be sent to the receiver. On the receiver side, a predictor, e.g., a pre-trained neural network, is used to infer the real-time value of the target signal $Y_t$ of interest. In practical time-series forecasting algorithms, the predictor needs to use multiple consecutive samples, e.g., $X_{t-b}^l = (V_{t-b}, V_{t-b -1}, \ldots, V_{t-b-l+1})$, to infer $Y_t$. We will design a transmission scheduler that determines (i) when to send a group of consecutive samples over the communication network to the receiver and (ii) which group of consecutive samples in the buffer to send. This medium access model is called ``selection-from-buffer,'' which was proposed recently in \cite{shisher2022does, shisher2024timely}. Such remote inference systems are crucial for numerous real-time applications, including sensor networks, airplane/vehicular control, robotics networks, and digital twins, where timely and accurate remote monitoring and control are essential for maintaining system stability and performance. 

We assume that the system starts operation at time slot $t = 0$. In the selection-from-buffer model, a new sample of the source signal $V_{t} \in \mathcal{V}$ is added to the buffer at the beginning of each time slot $t$, and the oldest sample is discarded. The buffer, with size $B$, is assumed to be initially full, storing the samples $\big(V_{0}, V_{-1}, \ldots, V_{-B+1}\big)$. Thus, at any time slot $t$, the buffer contains the most recent $B$ samples $\big(V_{t}, V_{t-1}, \ldots, V_{t-B+1}\big)$. The transmitter sends status update packets to the receiver one by one. Let $S_i$ and $D_i$ represent the transmission start time and delivery time of the $i$-th transmitted packet, respectively, such that $S_i < D_i$. At time slot $t=S_i$, the scheduler selects a group of consecutive samples from the buffer and forms the packet $X_{S_i-b_i}^{l_{i}} = \big(V_{S_i-b_{i}}, V_{S_i-b_{i}-1}, \ldots, V_{S_i-b_{i}-l_{i}+1}\big) \in \mathcal{V}^{l_i}$. Here, $l_i \in \{1, \ldots, B\}$ specifies the sequence length of the $i$-th packet $X_{S_i-b_i}^{l_{i}}$, while $b_i$ denotes the relative position of the sample $V_{S_i-b_{i}}$ in the buffer. The latter parameter $b_i$ controls the freshness of the samples included in the packet. When $B=1$, the selection-from-buffer model reduces to the generative-at-will model which has been studied extensively in previous works \cite{bacinoglu2015age, sun2017update, kadota2018scheduling, kadota2019scheduling, sun2019sampling2, tripathi2019whittle, klugel2019aoi, sun2019closed}. Hence, the packet $X_{S_i-b_i}^{l_{i}}$ is submitted to the communication network at time slot $S_{i}$ and delivered to the receiver at time slot $D_{i}$. Upon delivery, the receiver sends an acknowledgment (ACK) back to the transmitter, which is received at time slot $A_{i}$.

We assume that the scheduler always waits for feedback before transmitting a new data packet. In other words, the scheduler remains silent during the time slots between $D_{i}$ and $A_{i}$ such that $S_i < D_i < A_i \leq S_{i+1}$ for all~$i$. This assumption clearly restricts the throughput, the maximization of which has been the main objective in conventional user-centric, pipelined communication systems. However, under the concept of goal-oriented communication, which has been introduced to address the scalability problem in massive machine-type networks, the main objective is not throughput maximization but ensuring the effective execution of machine tasks at the destination while operating under limited communication resources. From this perspective, allowing multiple packets to be simultaneously in flight may increase network load without yielding proportional improvements in task performance, particularly in dense environments where congestion and contention are the primary concerns. Consequently, the one-packet-in-flight (stop-and-wait) assumption serves as a scenario-driven modeling choice that captures conservative communication behavior suitable for the class of systems considered in this work.

\subsection{Communication Network Model}

Next-generation communication networks are expected to accommodate multiple routes between a transmitter–receiver pair through a variety of intermediate network components, such as LEO satellites, inter-satellite links, and ground gateways. The experienced delay may therefore vary significantly depending on the specific components traversed by a packet and the manner in which it is handled during transmission. For example, when a packet is delivered entirely via inter-satellite links under moderate network load, the dominant delay component is propagation delay. In contrast, when inter-satellite link unavailability requires routing through intermediate ground stations, or when packets undergo store-and-forward transmission over long distances in LEO satellites under heavy network load, the resulting delay may involve either substantially larger propagation components or considerable store-and-forward waiting times. In such networks, the receiver can characterize the realized end-to-end delay behavior through a detection rule that is consistent with its observation capabilities (e.g., based on measured packet arrival times). The precision of this characterization depends on the receiver’s observation capability.

Furthermore, as in any practical communication system, physical-layer transmission errors and link-layer retransmissions may occur along these routes. In this work, we adopt a high-level network-layer perspective, where such effects are incorporated into the end-to-end transmission delay. Specifically, the delay experienced by a packet accounts for processing and propagation delays, as well as potential local retransmissions and error recovery mechanisms within the network. This modeling choice allows us to focus on the scheduler’s decision-making at the source, rather than on lower-layer protocol operations.

\begin{figure}[t]
\centerline{\includegraphics[width=.51\textwidth]{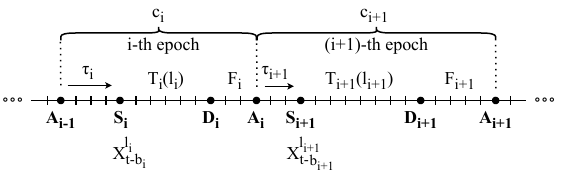}}
\caption{An illustration of the $i$-th and the $(i+1)$-th packet transmission epochs.}
\label{Time_Evolution}
\end{figure}

Let the $i$-th packet transmission epoch consist of the time slots $A_{i-1}, A_{i-1}+1, \ldots, A_i-1$. Define $T_i(l_i) = D_i - S_i$ as the transmission delay incurred in the $i$-th epoch and $F_i = A_i - D_i$ as the feedback delay. In addition, let $C_i$ denote the network delay state in the $i$-th epoch, with its realization~$c_i$. Fig.~\ref{Time_Evolution} provides an illustration of the $i$-th and the $(i+1)$-th packet transmission epochs.

The network delay state $C_i$ in the $i$-th epoch can take values in $\{1, 2,\ldots, N\}$, each representing a distinct delay regime that the network may experience during the transmission of the $i$-th data packet and that can be distinguished by the receiver. Each delay regime is associated with different transmission and feedback delay distributions, and the transmission delay also varies with the packet length. The delay is at least one time slot for each packet or ACK transmission. 

We assume that $C_{i}$ evolves according to a finite-state ergodic Markov chain with transition probabilities $p_{ij}$, where $i,j \in \{1, 2,\ldots, N\}$ and $p_{ij}$ represents the probability of transitioning from state~$i$ to state~$j$. The Markov chain makes a single transition at time slot $A_{i}$ and none otherwise.

The receiver detects the network delay state $C_i$ at the packet delivery time $D_i$ and embeds this information into the ACK message. The ACK notifies the transmitter of (i) the delay state experienced during the previous packet transmission, (ii) the completion of the previous packet transmission, and (iii) that the next packet can be submitted to the network. The network is reliable, meaning no packet is lost during transmission.

\subsection{Inference Error as a Function of AoI and Packet Length}

Age of Information (AoI) on the receiver side, $\Delta(t)$, is the time difference between the current time $t$ and the generation time $t-\Delta(t)$ of the freshest sample $V_{t-\Delta(t)}$ in the most recently delivered packet $X_{t-\Delta(t)}^{l}$ \cite{kaul2012real}. The AoI at time slot $t$ is determined by
\begin{equation} \label{AoI Evolution}
\Delta(t) = t-S_{i}+b_{i}, \text{  if  } D_{i} \leq t < D_{i+1}. 
\end{equation} 

Let $d(t)$ represent the sequence length of the most recently delivered packet by time slot $t$, which is given by
\begin{equation} \label{length of the most recently delivered packet}
d(t) = l_i, \text{  if  } D_{i} \leq t < D_{i+1}.
\end{equation}

We assume that the predictor employs a supervised learning algorithm based on \textit{Empirical Risk Minimization}~\cite{goodfellow2016deep}. For each packet length $l \in \{1, 2, \ldots, B\}$, the receiver contains a pre-trained neural network $\phi^*_{l}:\mathbb Z^+\times\mathcal{V}^l\mapsto\mathcal{A}$, which takes the AoI $\Delta(t)$ and the packet $X_{t-\Delta(t)}^{l}$ as inputs and produces an action~$a$ to infer the real-time value of the target signal~$Y_t$. The performance is evaluated through a loss function ${L:\mathcal{Y}\times\mathcal{A}\mapsto\mathbb{R}}$. The incurred loss is $L(y, a)$ when the action~$a$ is taken to predict $Y_t = y$. Given the AoI $\delta$ and packet length $l$, the training problem to obtain the neural network $\phi^*_{l}$ is formulated as:
\begin{align}
\phi^*_{l} = \argmin_{\phi \in \Lambda} \mathbb{E}_{Y,X^l \sim P_{\tilde Y_0, \tilde X_{-\delta}^l}}\big[L\big(Y,\phi\big(\delta, X^l\big)\big)\big],
\end{align}
where $\Lambda$ is the set of all mappings that the neural network can generate, and $P_{\tilde Y_0, \tilde X_{-\delta}^l}$ represents the empirical distribution of the target signal $\tilde Y_0$ and the packet $\tilde X_{-\delta}^l$ in the training dataset. The AoI $\delta$ denotes the time difference between the generation~of $\tilde Y_0$ and $\tilde X_{-\delta}^l$.

These pre-trained neural networks are used at the destination to predict the real-time value of the target signal $Y_t$ based on the most recently received data packet. We assume that the stochastic process ${\{\big(Y_t, V_t\big), t = 0, 1, \ldots\}}$ is stationary, implying that the statistical relationship between the source signal and the target signal is time-invariant. In addition, we further assume that the process ${\{\big(Y_t, V_t\big), t = 0, 1, \ldots\}}$ is independent of the process ${\{\big(\Delta(t), d(t)\big), t = 0, 1, \ldots\}}$. Accordingly, the average inference error at time slot $t$, conditioned on $\Delta(t)=\delta$ and $d(t)=l$, is expressed as:
\begin{equation} \label{Inference Error}
    \varepsilon\big(\delta,l\big) = \mathbb{E}_{Y, X^{l}  \sim P_{Y_{t}, X_{t-\delta}^{l}}}\big[L\big(Y,\phi^*_{l}\big(\delta, X^{l}\big)\big)\big],
\end{equation}
where $P_{Y_{t}, X_{t-\delta}^{l}}$ represents the joint distribution of the target signal $Y_{t}$ and the packet $X_{t-\delta}^{l}$.

The loss function $L$ can be chosen based on the goal of the remote inference application. For example, a quadratic loss function $L_{2}\big(y,\hat{y}\big)=\lVert y-\hat{y} \rVert_{2}^{2}$ is used in neural network-based minimum mean-squared estimation, where the action $a=\hat{y}$ is an estimate of the target signal $Y_{t}=y$ and $\lVert y \rVert_{2}$ is the Euclidean norm. In softmax regression (i.e., neural network-based maximum likelihood classification), the action $a=\mathcal{Q}_{Y}$ is a distribution of $Y_{t}$, and the loss function $L_{\log}\big(y,\mathcal{Q}_{Y}\big)=-\log\mathcal{Q}_{Y}(y)$ is the negative log-likelihood function of~the~value~$Y_{t}=y$.

In the subsequent sections, we solve a learning and communication co-design problem that aims to optimize the performance of a remote inference system. From equation \eqref{Inference Error}, we know that the inference error is a function of both the AoI and packet length. For a given AoI $\delta$, the inference error is a non-increasing function of packet length \cite{shisher2023learning}. On the other hand, for a given packet length $l$, the inference error does not always increase monotonically with the AoI \cite{shisher2022does, shisher2024timely}. When minimizing non-decreasing functions of AoI, fresher data packets are preferable, and the generate-at-will medium access model works well. However, to minimize inference error, which is not necessarily non-decreasing with respect to AoI, the selection-from-buffer model—allowing for the choice between fresh and older data packets—is more effective.

\subsection{Structure of the Scheduling Policy}
Upon receiving the ACK at time slot $A_i$, the scheduler determines a waiting time $\tau_{i+1}$ to specify the next submission time slot as $S_{i+1}=A_{i}+\tau_{i+1}$. Then, a new data packet $X_{S_{i+1}-b_{i+1}}^{l_{i+1}}$ is formed with packet length $l_{i+1}$ and relative buffer position $b_{i+1}$. A scheduling policy $\pi \in \Pi$ is defined as a tuple $\pi = \big(f, \ell, g\big)$, where $f = \big(b_{2}, b_{3}, \ldots\big)$ is the relative buffer position sequence, $\ell = \big(l_{2}, l_{3}, \ldots\big)$ is the packet length sequence, $g = \big(\tau_{2}, \tau_{3}, \ldots\big)$ is the waiting time sequence. In addition, let $\Pi$ denote the set of all causal policies $\pi$. We assume that the scheduler does not utilize knowledge of the process $\{\big(Y_t, V_t\big), t = 0, 1, \ldots\}$. This assumption implies that the processes $\{\big(Y_t, V_t\big), t = 0, 1, \ldots\}$ and $\{\big(\Delta(t), d(t)\big), t = 0, 1, \ldots\}$ are mutually independent. The initial conditions of the system are assumed to be $S_{1}=0$, $l_{1}=1$, $b_1 = 0$, $c_{1} \in \{1, 2, \ldots, N\}$, and $\Delta(0)$ is a finite constant. 

\section{Learning and Communications Co-design: Time-invariant Packet Length Selection} \label{sec_time_inv}
While designing goal-oriented communication protocols for remote inference, selecting a constant packet length is a simple and often effective choice. If packet lengths vary dynamically from one packet to the next, the receiver needs to reconfigure the predictor over time to accommodate the changing input dimensions, which adds complexity. This section considers systems where this complexity is avoided by choosing a time-invariant packet length sequence. Systems that can vary the packet length over time will be discussed later in Section \ref{sec_4}.

\subsection{Co-design Problem Formulation}
Let $\Pi_{l}$ represent the set of all causal scheduling policies $\pi=(f, \ell, g)$ with a time-invariant packet length sequence ${\ell = \{l,l, \ldots\}}$ for $l \in \{1, 2, \ldots, B\}$:
\begin{equation}
    \Pi_l = \{\pi\in \Pi: l = l_2 = l_3 = \ldots \}.
\end{equation}
The packet length affects both learning and communication. If it is selected based solely on learning performance, communication may suffer since transmission delays increase with packet length. Therefore, we solve a learning and communications co-design problem under time-invariant packet length selection, formulated as the following two-layer nested optimization problem.
\begin{equation} \label{Problem 1: Inner opt problem}
    \varepsilon_{l, \text{opt}}= \inf_{\pi \in \Pi_{l}}\limsup_{T\to\infty} \frac{1}{T}\mathbb{E}_{\pi}\Bigg[\sum_{t=0}^{T-1} \varepsilon\big(\Delta(t), l\big)\Bigg],
\end{equation} 
\begin{equation} \label{Problem 1: Outer opt problem}
    \varepsilon_{\text{opt}}^{\text{inv.}} = \min_{l \in \{1, 2, \ldots, B\}} \varepsilon_{l, \text{opt}},
\end{equation} where $\varepsilon_{l, \text{opt}}$ is the optimum value of the inner optimization problem \eqref{Problem 1: Inner opt problem} for a given packet length $l$, and $\varepsilon_{\text{opt}}^{\text{inv.}}$ is the optimum value of the two-layer nested optimization problem \eqref{Problem 1: Inner opt problem}-\eqref{Problem 1: Outer opt problem} with the optimal time-invariant packet length.

The inner problem \eqref{Problem 1: Inner opt problem} optimizes the relative buffer position sequence $f$ and waiting time sequence $g$ for a given packet length $l \in \{1, 2, \ldots, B\}$. This problem can be cast as an infinite-horizon average-cost Semi-Markov Decision Process (SMDP). Although dynamic programming algorithms, such as policy iteration and value iteration, are typically used to solve such problems~\cite{puterman2014markov, bertsekasdynamic}, we are able to derive a closed-form solution that achieves significantly lower computational complexity than dynamic programming. Moreover, the outer problem \eqref{Problem 1: Outer opt problem} optimizes the packet length and involves a straightforward search over the integer values \(\{1, 2, \ldots, B\}\).

Problem \eqref{Problem 1: Inner opt problem}-\eqref{Problem 1: Outer opt problem} is a more general formulation than that considered in \cite[Section IV-A]{shisher2023learning}, as it accounts for both (i) the two-way delay between the transmitter and receiver, and (ii) the Markovian, time-varying nature of the two-way delay distribution in next-generation communication networks.

\subsection{Optimal Solution to \eqref{Problem 1: Inner opt problem}-\eqref{Problem 1: Outer opt problem}} \label{SMDP_1}
Each time slot \(A_i\) is a decision time for the SMDP \eqref{Problem 1: Inner opt problem}. The system state at each $A_i$ is given by a tuple $\big(\Delta(A_i), C_i\big)$, where $\Delta(A_i)$ represents the AoI at time-slot $A_i$, and $C_i$ indicates the delay state in the $i$-th packet transmission epoch. A realization of the system state is denoted by $\big(\delta, c_i\big)$. The actions determined at time slot $A_i$ are the waiting time \(\tau_{i+1}\) and the buffer position \(b_{i+1}\). A detailed description of this SMDP can be found in Appendix \ref{App_Comp_SMDP_1}. The Bellman optimality equation of the SMDP \eqref{Problem 1: Inner opt problem} is formulated as
\begin{align}\label{Bellman Equation for fixed packet length}
&h_1\big(\delta, c_i\big)= \min_{\substack{\tau_{i+1} \in \{0, 1, \ldots\} \\ b_{i+1} \in \{0,1,\ldots,B-l\}}} \nonumber\\
&\mathbb E \Bigg[ \sum_{k=0}^{\tau_{i+1}+T_{i+1}(l)-1} \bigg(\varepsilon\big(\delta+k,l\big) -\varepsilon_{l, \text{opt}}\bigg) \Bigg| C_i=c_i\Bigg]\nonumber\\
&+\mathbb E \Bigg[ \sum_{k=0}^{F_{i+1}-1} \bigg(\varepsilon\big(b_{i+1}+T_{i+1}(l)+k,l\big) -\varepsilon_{l, \text{opt}}\bigg) \Bigg| C_i=c_i\Bigg]\nonumber\\
&+\mathbb E\big[h_1\big(b_{i+1}+T_{i+1}(l)+F_{i+1}, C_{i+1}\big)\big| C_i=c_i\big],
\end{align}
where $\delta \in \mathbb Z^+$, $c_i \in \{1, 2, \ldots, N\}$, and $h_1(\delta, c_i)$ is the relative value function.

We need to solve the Bellman optimality equation \eqref{Bellman Equation for fixed packet length} to find an optimal solution to \eqref{Problem 1: Inner opt problem}. This involves determining the optimal buffer position $b_{i+1}^*$ and waiting time $\tau_{i+1}^*$ for each state $\big(\delta, c_i\big)$. By exploiting the structural properties of the Bellman optimality equation \eqref{Bellman Equation for fixed packet length}, we are able to obtain a closed-form solution to \eqref{Problem 1: Inner opt problem}, as asserted in the following theorem.

Define the index function
\begin{equation} \label{index_function_const_packet_length}
    \gamma\big(\delta, d, l, c\big)=\min_{\tau\in\{1,2,\ldots\}} \frac{1}{\tau}\sum_{k=0}^{\tau-1}\mathbb{E}\bigg[\varepsilon\big(\delta+T_{i+1}(l)+k,d\big)\bigg|C_{i}=c\bigg],
\end{equation} for all $\delta \in \mathbb Z^+$, $d, l \in \{1, 2,\ldots, B\}$, and $c \in \{1, 2,\ldots, N\}$.

\begin{theorem} \label{Theorem 1}
    There exists an optimal solution $\pi^* = \big(\big(b_2^*, b_3^*, \ldots\big), \big(l, l, \ldots\big), \big(\tau_2^*, \tau_3^*, \ldots\big)\big) \in \Pi_l$ to problem \eqref{Problem 1: Inner opt problem}, where the optimal waiting time $\tau_{i+1}^*$ is determined by the index-based threshold rule
    \begin{equation} \label{Theorem_1_Wait_Rule}
        \tau_l\big(\delta, c_i\big) = \min\{ k \geq 0  : \gamma\big(\delta+k, l, l, c_i\big) \geq \beta\},
    \end{equation} the optimal buffer position $b_{i+1}^*$ is given by
    \begin{align} \label{Theorem_1_Buf_Pos_Seq}
        &b_{i+1}^* = \argmin_{b \in \{0,1,\ldots,B-l\}} \nonumber \\
        &\mathbb{E} \Bigg[ \sum_{k=0}^{\mathfrak{D}-1} \bigg(\varepsilon\big(b+T_{i+1}(l)+k,l\big) - \beta \bigg) \Bigg| C_i=c_i\Bigg],
    \end{align} $\mathfrak{D} = F_{i+1}+\tau_l\big(b+T_{i+1}(l)+F_{i+1}, C_{i+1}\big)+T_{i+2}(l)$, and $\gamma\big(\cdot\big)$ is defined in \eqref{index_function_const_packet_length}. Furthermore, $\beta$ is the unique root of
    \begin{equation} \label{Theorem_1_unique_root_of}
        \mathbb{E}\Bigg[\sum_{t=A_{i}(\beta)}^{A_{i+1}(\beta)-1} \varepsilon\big(\Delta(t), l\big)\Bigg]-\beta\mathbb{E}\big[A_{i+1}(\beta)-A_{i}(\beta)\big] =0,
    \end{equation} and $\beta = \varepsilon_{l, \text{opt}}$ is exactly the optimum value of problem~\eqref{Problem 1: Inner opt problem}. $A_{i}(\beta)$ in equation \eqref{Theorem_1_unique_root_of} represents the $i$-th ACK reception time slot when the scheduling policy is determined according to the rules specified by equations~\eqref{Theorem_1_Wait_Rule} and \eqref{Theorem_1_Buf_Pos_Seq} for a given $\beta$.
\end{theorem}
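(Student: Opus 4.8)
The plan is to solve the inner SMDP in \eqref{Problem 1: Inner opt problem} by a guess-and-verify argument on the Bellman optimality equation \eqref{Bellman Equation for fixed packet length}, and then to identify the optimal average cost $\beta=\varepsilon_{l,\text{opt}}$ through a renewal-reward (Dinkelbach-type) fixed-point characterization. As a preliminary step I would record that the problem is a well-posed average-cost SMDP: the route process $\{C_i\}$ is an exogenous finite ergodic Markov chain that is not influenced by the actions, each epoch has finite expected length (the delay is at least one slot), and $\varepsilon(\cdot,l)$ is bounded because $(Y_t,V_t)$ is stationary. These facts let me invoke the standard existence result for a bounded relative value function $h_1$ and a scalar $\beta$ solving \eqref{Bellman Equation for fixed packet length}, with $\beta$ equal to the optimal value of \eqref{Problem 1: Inner opt problem}.

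The key structural observation is that the right-hand side of \eqref{Bellman Equation for fixed packet length} decouples across the two decision variables: the first expectation depends on the action only through the waiting time $\tau_{i+1}$ and on the state only through $(\delta,c_i)$, whereas the second expectation together with the continuation term depends on the action only through the buffer position $b_{i+1}$. This motivates the ansatz $h_1(\delta,c)=W(\delta,c)+P(c)$, where
\[
W(\delta,c)=\min_{\tau\ge 0}\mathbb{E}\left[\sum_{k=0}^{\tau+T_{i+1}(l)-1}\big(\varepsilon(\delta+k,l)-\beta\big)\,\Big|\,C_i=c\right]
\]
is the value of the waiting subproblem and $P$ solves a Poisson-type equation over the chain $\{C_i\}$. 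Substituting this ansatz, the term $\mathbb{E}[P(C_{i+1})\mid c_i]$ is independent of $b_{i+1}$, so the minimization over $(\tau_{i+1},b_{i+1})$ splits into an optimal-stopping problem for $\tau_{i+1}$ that sees only $W$, and a finite minimization for $b_{i+1}$ that sees $W$ only through the continuation term. Checking that the resulting pair $(h_1,\beta)$ satisfies \eqref{Bellman Equation for fixed packet length} then certifies optimality via the average-cost verification theorem.

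For the waiting subproblem I would use the telescoping identity that increasing $\tau$ by one slot changes the objective of $W$ by $\mathbb{E}[\varepsilon(\delta+\tau+T_{i+1}(l),l)\mid c]-\beta$, which reduces the stopping decision to comparing forward window-averages of $\varepsilon(\cdot,l)$ against $\beta$. Because $\varepsilon(\cdot,l)$ need not be monotone, the naive one-step test is insufficient; instead, the cumulative relative cost from an index $k_0$ onward is non-decreasing precisely when every forward averaging window starting at $k_0$ has mean at least $\beta$, i.e.\ when $\gamma(\delta+k_0,l,l,c)\ge\beta$ with $\gamma$ as in \eqref{index_function_const_packet_length}, while for $k<k_0$ the condition $\gamma<\beta$ exhibits an improving window and rules out stopping. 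This yields the first-crossing threshold rule \eqref{Theorem_1_Wait_Rule} as a global minimizer. The buffer-position minimizer then follows by inserting the explicit $W$ into the continuation term and merging the feedback-period cost with the next-epoch waiting-plus-transmission cost into the single window of length $\mathfrak{D}$, reproducing \eqref{Theorem_1_Buf_Pos_Seq}. Finally, since $\{C_i\}$ is exogenous, the renewal-reward theorem over one epoch $[A_i,A_{i+1})$ shows that the $\beta$-threshold policy attains long-run average cost equal to the ratio of expected per-epoch cost to expected per-epoch length; setting this equal to the threshold value gives the fixed-point condition \eqref{Theorem_1_unique_root_of}, whose left-hand side is strictly decreasing in $\beta$, so the root is unique and equals $\varepsilon_{l,\text{opt}}$.

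I expect the principal obstacle to be the joint verification step rather than the index structure itself: I must show that the explicitly constructed $W$, together with a bounded Poisson solution $P$ for the modulating chain, actually satisfies \eqref{Bellman Equation for fixed packet length} with the \emph{same} $\beta$ that solves \eqref{Theorem_1_unique_root_of}, thereby tying the per-epoch fractional optimization to the average-cost fixed point. A related technical point is ensuring the threshold is finite (the first crossing $k_0$ exists), which requires the asymptotic inference error to dominate $\beta$, and carrying the renewal-reward and monotonicity arguments through the Markov-modulated epoch structure by averaging over the stationary distribution of $\{C_i\}$ rather than treating the epochs as i.i.d.
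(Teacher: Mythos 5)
Your proposal is correct and follows essentially the same route as the paper's proof: your decoupling of \eqref{Bellman Equation for fixed packet length} into a waiting subproblem and a buffer-position subproblem is the paper's Step 1, your forward-window-average argument yields the same first-crossing rule \eqref{Theorem_1_Wait_Rule}, your Poisson-equation component $P(c)$ is exactly the relative value function $h_1'(c_i)$ of the reduced delay-state-only SMDP the paper constructs in Step 2 (with the same merged window of length $\mathfrak{D}$ giving \eqref{Theorem_1_Buf_Pos_Seq}), and your renewal-reward fixed point with strict monotonicity in $\beta$ is the paper's Step 3. The only distinction is one of framing---you verify an explicit ansatz $h_1 = W + P$ via an average-cost verification theorem, while the paper derives the structure directly from the Bellman equation and then reduces the state space---which is not a substantively different argument.
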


\begin{proof}[Proof Sketch]
    We prove Theorem \ref{Theorem 1} in three steps:
    
    \textbf{Step 1:} We first establish two key results using the Bellman optimality equation \eqref{Bellman Equation for fixed packet length}:
    \begin{itemize}
        \item The optimal waiting time $\tau_{i+1}^*$ is determined by the index-based threshold rule 
        $$
        \min\{k \geq 0 : \gamma\big(\delta+k, l, l, c_i\big) \geq \varepsilon_{l, \text{opt}}\}
        $$ 
        for all $i$. Here, $\gamma\big(\cdot\big)$ is the index function defined in \eqref{index_function_const_packet_length}.
        \item The delay state \(c_i\) in the $i$-th epoch is a sufficient statistic for determining the optimal buffer position \(b_{i+1}^*\) for all $i$.
    \end{itemize}
    
    \textbf{Step 2:} Since the delay state $c_i$ in the $i$-th epoch is a sufficient statistic for determining $b_{i+1}^*$ for all $i$, we can construct a new SMDP for determining $b_{i+1}^*$ using $c_i$ as the state. In the new SMDP, we fix the waiting time decisions by using the optimal threshold rule $\tau_l\big(\delta, c_i\big)$. Moreover, in the new SMDP, the decision time is $D_{i+1}$ instead of $A_i$.  The Bellman optimality equation of this new SMDP is expressed as:
    \begin{align}\label{New Bellman Equation for fixed packet length}
    &h_1'\big(c_i\big)= \min_{\substack{b_{i+1} \in \{0,1,\ldots,B-l\}}} \nonumber\\
    &\mathbb E \Bigg[ \sum_{k=0}^{\vartheta_{i+1}-1} \bigg(\varepsilon\big(b_{i+1}+T_{i+1}(l)+k,l\big) -\varepsilon_{l, \text{opt}}\bigg) \Bigg| C_i=c_i\Bigg]\nonumber\\
    &+\mathbb E\bigg[h_1'\big(C_{i+1}\big)\bigg| C_i=c_i\bigg],
    \end{align} where $c_i \in \{1, 2, \ldots, N\}$, $h_1'\big(c_i\big)$ is the relative value function, and $\vartheta_{i+1} = F_{i+1}+\tau_l\big(b_{i+1} + T_{i+1}(l) + F_{i+1}, C_{i+1}\big)+T_{i+2}(l)$.
    
    Only the first term on the right-hand side of equation \eqref{New Bellman Equation for fixed packet length} depends on the action $b_{i+1}$. Thus, the Bellman optimality equation \eqref{New Bellman Equation for fixed packet length} is decomposable and can be solved as a per-decision-epoch optimization problem, allowing any $b_{i+1}^*$ in the optimal buffer position sequence ${f^* = \big(b_2^*, b_3^*, \ldots\big)}$ to be expressed as shown in equation \eqref{Theorem_1_Buf_Pos_Seq}.
    
    \textbf{Step 3:} Finally, we show that the optimal value $\varepsilon_{l, \text{opt}}$ of problem \eqref{Problem 1: Inner opt problem} is the unique root of equation \eqref{Theorem_1_unique_root_of}.
    
    The detailed proof is provided in Appendix \ref{App_Proof_TH1}.
\end{proof}

The optimal scheduling policy for the inner optimization problem \eqref{Problem 1: Inner opt problem}, as outlined in Theorem \ref{Theorem 1}, is well-structured. Each optimal waiting time \(\tau_{i+1}^*\) is determined by an index-based threshold rule \(\tau_l\big(\delta, c_i\big)\). The index function \(\gamma\big(\cdot\big)\), defined in equation~\eqref{index_function_const_packet_length}, can be readily computed for any state \(\big(\delta, c_i\big)\) using the conditional distribution of \(T_{i+1}(l)\) given \(C_i = c_i\). Although the index-based threshold rule $\tau_l\big(\delta, c_i\big)$ closely resembles the waiting time rule provided in \cite[Theorem 1]{shisher2023learning}, it is not solely dependent on the AoI but also incorporates the delay state $C_i = c_i$. Because problem \eqref{Problem 1: Inner opt problem}-\eqref{Problem 1: Outer opt problem} involves non-zero feedback delay and time-varying delay distribution, the existence of such a simple waiting time rule was not evident and required additional technical efforts for its derivation.

Furthermore, according to equation \eqref{Theorem_1_Buf_Pos_Seq}, the optimal buffer position \(b_{i+1}^*\) is designed by minimizing the relative inference error in the interval between \(D_{i+1}\) and \(D_{i+2}\), given that the waiting time is determined by the optimal rule in \eqref{Theorem_1_Wait_Rule}. Each \(b_{i+1}^*\) depends solely on the delay state $C_i = c_i$ in the $i$-th epoch and is independent of the AoI state \(\delta = \Delta(A_i)\). 

For the case of i.i.d. transmission delay and immediate feedback, the optimal buffer position sequence is time-invariant, as shown in \cite[Theorem 1]{shisher2023learning}. However, when the feedback delay is non-zero and the delay state $C_i$ changes according to a Markov chain, achieving optimal performance requires that the optimal buffer position $b_{i+1}^*$ is a function of the delay state $C_i = c_i$, and this function is exactly given by \eqref{Theorem_1_Buf_Pos_Seq}.

Finally, the optimum value \(\varepsilon_{l, \text{opt}}\), which is the threshold in the waiting time rule \eqref{Theorem_1_Wait_Rule} and necessary for solving \eqref{Theorem_1_Buf_Pos_Seq}, is the unique root of equation \eqref{Theorem_1_unique_root_of}. This equation can be efficiently solved using low-complexity algorithms \cite[Algorithms~1-3]{ornee2021sampling}, such as bisection search.

After solving the inner optimization problem \eqref{Problem 1: Inner opt problem} using Theorem \ref{Theorem 1}, the outer optimization problem \eqref{Problem 1: Outer opt problem} is just a straightforward search over the integer values $\{1, 2, \ldots, B\}$.

\section{Learning and Communications Co-design: Time-Variable Packet Length Selection} \label{sec_4}

In contrast to the previous case of time-invariant packet length, this section aims to achieve optimal performance without imposing restrictions on the packet length sequence $\ell = \big(l_2, l_3, \ldots\big)$. This unconstrained approach allows us to maximize the system's potential by dynamically adjusting the packet length $l_i$ over time. Therefore, this case represents the most general solution.

\subsection{Co-design Problem Formulation}

The learning and communications co-design problem with general, time-variable packet length selection is formulated as
\begin{equation} \label{Problem 3}
    \varepsilon_{\text{opt}}=\inf_{\pi \in \Pi}\limsup_{T\to\infty} \frac{1}{T}\mathbb{E}_{\pi}\Bigg[\sum_{t=0}^{T-1} \varepsilon\big(\Delta(t), d(t)\big)\Bigg],
\end{equation} where $\varepsilon_{\text{opt}}$ is the optimum value of problem \eqref{Problem 3}.

Problem \eqref{Problem 3} is an infinite-horizon average-cost SMDP, which is more challenging to solve compared to problem~\eqref{Problem 1: Inner opt problem}. We will first formulate a Bellman optimality equation for this SMDP and then simplify it using a structural result regarding the optimal policy. The time complexity for solving the simplified Bellman optimality equation using dynamic programming algorithms is substantially lower than that for solving the original Bellman optimality equation.

\subsection{Optimal Solution to \eqref{Problem 3}} \label{SMDP_3}

In the SMDP \eqref{Problem 3}, the system state at each decision time $A_i$ is given by the tuple $\big(\Delta(A_i), d(A_i), C_i\big)$, where $\Delta(A_i)$ denotes the AoI at time slot $A_i$, $d(A_i)$ indicates the length of the most recently received packet by that time, and $C_i$ represents the delay state in the $i$-th packet transmission epoch. A realization of the system state is denoted by $\big(\delta, d, c_i\big)$. The actions taken at each decision time $A_i$ include the waiting time $\tau_{i+1}$, the packet length $l_{i+1}$, and the buffer position $b_{i+1}$. A detailed description of this SMDP can be found in Appendix \ref{App_Comp_SMDP_3}. The Bellman optimality equation for the SMDP \eqref{Problem 3} can be expressed as
\begin{align}\label{Bellman Equation for problem 3}
&h_2\big(\delta, d, c_i\big)= \min_{\substack{\tau_{i+1} \in \{0, 1, \ldots\} \\ l_{i+1} \in \{1, 2, \ldots, B\} \\ b_{i+1} \in \{0,1,\ldots,B-l_{i+1}\}}} \nonumber\\
&\mathbb E \Bigg[ \sum_{k=0}^{\tau_{i+1}+T_{i+1}(l_{i+1})-1} \bigg(\varepsilon\big(\delta+k,d\big) -\varepsilon_{\text{opt}}\bigg) \Bigg| C_i=c_i\Bigg]\nonumber\\
&+\mathbb E \Bigg[ \sum_{k=0}^{F_{i+1}-1} \bigg(\varepsilon\big(b_{i+1}+T_{i+1}(l_{i+1})+k,l_{i+1}\big) -\varepsilon_{\text{opt}}\bigg) \Bigg| C_i=c_i\Bigg]\nonumber\\
&+\mathbb E\big[h_2\big(b_{i+1}+T_{i+1}(l_{i+1})+F_{i+1}, l_{i+1}, C_{i+1}\big)\big| C_i=c_i\big],
\end{align}
where $\delta \in \mathbb Z^+$, $d \in \{1, 2, \ldots, B\}$, $c_i \in \{1, 2, \ldots, N\}$, and $h_2\big(\delta, d, c_i\big)$ is the relative value function.

By solving the Bellman optimality equation \eqref{Bellman Equation for problem 3} using dynamic programming, we can obtain a solution to \eqref{Problem 3}. The time complexity of this dynamic programming algorithm is high, as it requires the joint optimization of three variables for each state $\big(\delta, d, c_i\big)$. The following Theorem \ref{Theorem 2} provides a threshold rule for determining the waiting time action at each~$A_i$, simplifying the Bellman optimality equation.

\begin{theorem} \label{Theorem 2}
    There exists an optimal solution $\pi^* = \big(\big(b_2^*, b_3^*, \ldots\big), \big(l_2^*, l_3^*, \ldots\big), \big(\tau_2^*, \tau_3^*, \ldots\big)\big) \in \Pi$ to problem \eqref{Problem 3}. The optimal waiting time $\tau_{i+1}^*$ is determined by the index-based threshold rule
    \begin{equation} \label{Th3_Wait_Rule}
        \tau\big(\delta, d, l_{i+1}, c_i\big) = \min\{k \geq 0 : \gamma\big(\delta+k, d, l_{i+1}, c_i\big) \geq \varepsilon_{\text{opt}}\},
    \end{equation} where $\gamma\big(\cdot\big)$ is defined in \eqref{index_function_const_packet_length}. Furthermore, the optimal packet length $l_{i+1}^*$ and buffer position $b_{i+1}^*$ are determined by solving the following simplified Bellman optimality equation:
    \begin{align}\label{Simplified Bellman Equation for problem 3}
    &h_2\big(\delta, d, c_i\big)= \min_{\substack{l_{i+1} \in \{1, 2, \ldots, B\}}} \Bigg\{\nonumber\\ &\mathbb E \Bigg[ \sum_{k=0}^{\tau\big(\delta, d, l_{i+1}, c_i\big)+T_{i+1}(l_{i+1})-1} \bigg(\varepsilon\big(\delta+k,d\big)-\varepsilon_{\text{opt}}\bigg) \Bigg| C_i=c_i \Bigg] \nonumber\\
    &+\min_{b_{i+1} \in \{0, 1, \ldots, B-l_{i+1}\}} \Bigg\{\nonumber\\
    &\mathbb E \Bigg[ \sum_{k=0}^{F_{i+1}-1} \bigg(\varepsilon\big(b_{i+1}+T_{i+1}(l_{i+1})+k,l_{i+1}\big) -\varepsilon_{\text{opt}}\bigg) \Bigg| C_i=c_i\Bigg]\nonumber\\
    &+\mathbb E\big[h_2\big(b_{i+1}+T_{i+1}(l_{i+1})+F_{i+1}, l_{i+1}, C_{i+1}\big)\big| C_i=c_i\big]\Bigg\}\Bigg\}.
    \end{align}
\end{theorem}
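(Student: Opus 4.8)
The plan is to treat the Bellman optimality equation \eqref{Bellman Equation for problem 3} as given---its validity and the existence of a stationary optimal policy $\pi^*$ follow from standard average-cost SMDP theory under the finite-state ergodic delay chain, exactly as in the setting of Theorem~\ref{Theorem 1}---and to collapse its right-hand side by exploiting a decoupling of the three actions. The key structural observation is that, for each fixed packet length $l_{i+1}$, the waiting time $\tau_{i+1}$ enters \emph{only} the first expectation, while the buffer position $b_{i+1}$ enters \emph{only} the second and third expectations. First I would make this precise: the age of the newly selected sample is $b_{i+1}+T_{i+1}(l_{i+1})$ at its delivery and $b_{i+1}+T_{i+1}(l_{i+1})+F_{i+1}$ at the next ACK, neither of which depends on how long the transmitter waited before submission. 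Consequently the feedback-interval cost and the future-value term $\mathbb{E}[h_2(\cdot)\mid C_i=c_i]$ are independent of $\tau_{i+1}$, whereas the first term---the cost of carrying the stale packet of age $\delta$ and length $d$ through the wait-plus-transmission window---is independent of $b_{i+1}$. This yields the nesting $\min_{l_{i+1}}\{\min_{\tau_{i+1}}[\text{first term}]+\min_{b_{i+1}}[\text{second}+\text{third terms}]\}$, in which the two inner minimizations are carried out separately.

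The crux is the inner minimization over $\tau_{i+1}$. Writing $G(\tau)$ for the first expectation with $l_{i+1}$ fixed and $T=T_{i+1}(l_{i+1})$, the forward difference is
\[
G(\tau+1)-G(\tau)=\mathbb{E}\big[\varepsilon(\delta+\tau+T,d)-\varepsilon_{\text{opt}}\,\big|\,C_i=c_i\big],
\]
so the cumulative cost $\Phi(\tau):=G(\tau)-G(0)$ satisfies, by the definition \eqref{index_function_const_packet_length},
\[
\gamma(\delta+\tau,d,l_{i+1},c_i)-\varepsilon_{\text{opt}}=\min_{\tau'\ge 1}\frac{\Phi(\tau+\tau')-\Phi(\tau)}{\tau'}.
\]
Hence $\gamma(\delta+\tau,d,l_{i+1},c_i)\ge\varepsilon_{\text{opt}}$ holds exactly when $\Phi(m)\ge\Phi(\tau)$ for every $m>\tau$, i.e., when $\tau$ is a point beyond which the cumulative cost never descends. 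I would then argue that the smallest such index coincides with the smallest global minimizer of $\Phi$: any earlier index has $\gamma<\varepsilon_{\text{opt}}$ and thus admits a window $\tau'$ with $\Phi(\tau+\tau')<\Phi(\tau)$, so it cannot be optimal, while the global minimizer is trivially a point of non-descent. This identifies $\tau^*=\min\{k\ge0:\gamma(\delta+k,d,l_{i+1},c_i)\ge\varepsilon_{\text{opt}}\}$, which is precisely \eqref{Th3_Wait_Rule}.

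I expect this waiting-time step to be the main obstacle, precisely because $\varepsilon(\cdot,d)$ need not be monotone in the AoI: the forward difference $G(\tau+1)-G(\tau)$ can change sign repeatedly, so a naive first-crossing threshold on the single-slot expected error would fail, and one genuinely needs the window-averaging built into the index $\gamma$ to pick out the \emph{global} minimizer rather than a spurious local one. This argument mirrors the waiting-time analysis behind Theorem~\ref{Theorem 1}; the new features here are that the threshold is taken at the global offset $\varepsilon_{\text{opt}}$ and that the transmission-delay distribution inside $\gamma$ is conditioned on the Markovian delay state $C_i=c_i$. Finally, substituting $\tau_{i+1}^*=\tau(\delta,d,l_{i+1},c_i)$ into the first expectation and retaining the inner $\min_{b_{i+1}}$ over the remaining two expectations, all nested inside $\min_{l_{i+1}}$, turns \eqref{Bellman Equation for problem 3} into the simplified form \eqref{Simplified Bellman Equation for problem 3}; the optimal length $l_{i+1}^*$ and buffer position $b_{i+1}^*$ are then read off as the outer and inner minimizers, completing the argument.
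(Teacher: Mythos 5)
Your proposal is correct and takes essentially the same approach as the paper's proof: the same decoupling of \eqref{Bellman Equation for problem 3} (for fixed $l_{i+1}$, the waiting time enters only the first term and the buffer position only the last two terms), the same reduction of the waiting-time subproblem to the index function $\gamma$ of \eqref{index_function_const_packet_length}, and the same substitution yielding \eqref{Simplified Bellman Equation for problem 3}; the paper merely organizes the waiting-time step as a sequential induction over $\tau = 0, 1, 2, \ldots$ (the steps \eqref{Opt_W_time_S1}--\eqref{Opt_W_time_S5} inherited from the proof of Theorem~\ref{Theorem 1}) instead of your cumulative-cost/non-descent formulation. One line is worth adding when you write it up: your two observations (earlier indices admit a descending window, and global minimizers are non-descent points) each yield only that every global minimizer of $\Phi$ is at least the first non-descent point $k_0$, so to identify the two you should also note that for any global minimizer $m \geq k_0$, non-descent of $k_0$ gives $\Phi(m) \geq \Phi(k_0)$, hence $\Phi(k_0)$ attains the minimum and $k_0$ is itself the smallest global minimizer.
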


\begin{proof}[Proof Sketch]
    Theorem \ref{Theorem 2} is proved based on the observation that the optimal waiting time $\tau_{i+1}^*$ and buffer position $b_{i+1}^*$ for any state $\big(\delta, d, c_i\big)$ can be determined by solving separate optimization problems, provided that the optimal packet length $l_{i+1}^*$ is known. We derive a solution for the separate waiting time optimization problem as an index-based threshold rule, which, in turn, enables the simplification of the Bellman optimality equation \eqref{Bellman Equation for problem 3} to \eqref{Simplified Bellman Equation for problem 3}. The detailed proof is presented~in~Appendix \ref{App_Proof_TH2}.
\end{proof}

The Bellman optimality equations \eqref{Bellman Equation for problem 3} and \eqref{Simplified Bellman Equation for problem 3} for the SMDP \eqref{Problem 3} can be solved using policy iteration, value iteration, or linear programming \cite[Chapter 11.4.4]{puterman2014markov}. In particular, policy iteration is efficient for solving SMDPs \cite{puterman2014markov, shisher2023learning}. The time complexity for solving the simplified Bellman equation~\eqref{Simplified Bellman Equation for problem 3} via policy iteration is significantly lower than that for solving \eqref{Bellman Equation for problem 3}. Specifically, the policy improvement step for \eqref{Bellman Equation for problem 3} has a time complexity of $O(\tau_{\text{bound}}^2 B^3)$, while it is $O(\tau_{\text{bound}}^2 B + \tau_{\text{bound}} B^3 + \tau_{\text{bound}} B^2)=O(\max\{\tau_{\text{bound}}^2 B, \tau_{\text{bound}} B^3\})$ for the simplified equation \eqref{Simplified Bellman Equation for problem 3}. The policy evaluation step is identical for both equations.

\section{Simulations}
To evaluate the performance of the optimal scheduling policies in Theorems \ref{Theorem 1} and \ref{Theorem 2}, we consider two experiments: (i) remote inference of auto-regressive (AR) processes, which offers a model-based evaluation; and (ii) cart-pole state prediction, which provides a trace-driven evaluation.

\subsection{Model-Based Evaluation}

\begin{figure}[t]
\centerline{\includegraphics[width=.50\textwidth]{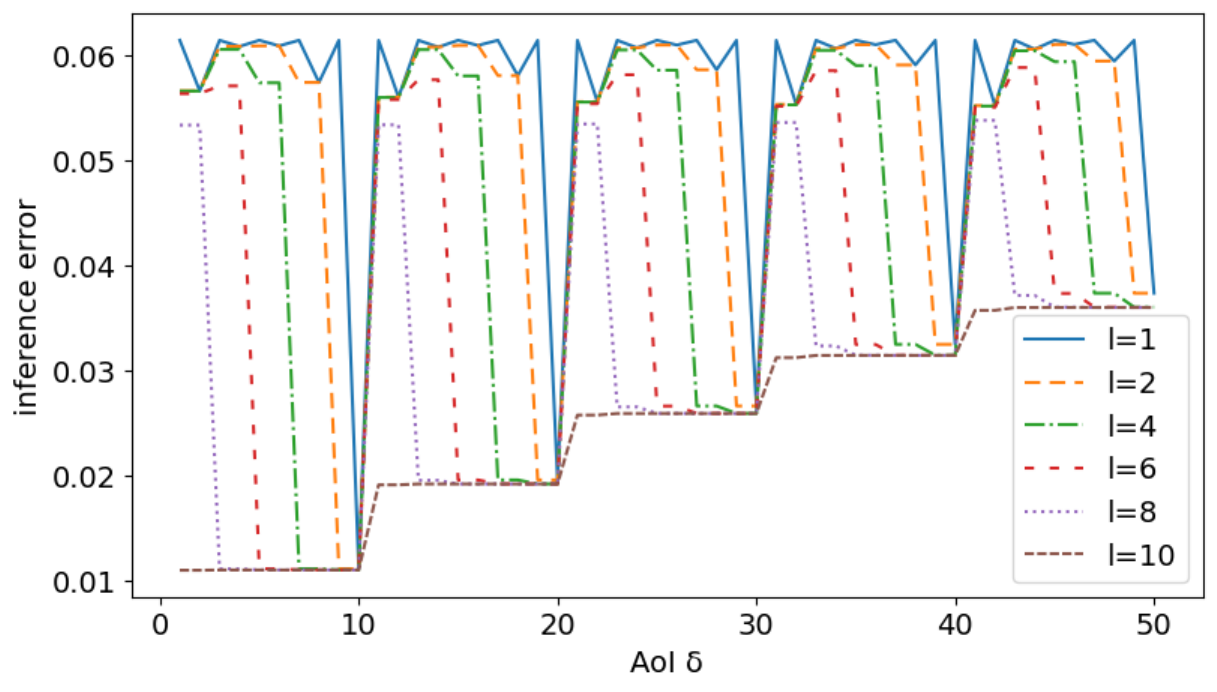}}
\caption{Inference error of the AR process for AoI values ${\delta = 1, 2, \ldots, 50}$ and packet lengths $l=1, 2, 4, 6, 8, 10$.}
\label{AR_inferr}
\end{figure}

Higher-order auto-regressive (AR) linear time-invariant systems are widely used models for various practical scenarios, such as control systems incorporating memory and delay~\cite{pham2016static,briat2010memory,lechappe2016dynamic}, and wireless communication channels~\cite{truong2013effects}. Motivated by this, we consider a remote inference problem for an AR process, where the target signal $Y_t \in \mathbb R$ evolves according to:
\begin{align}
Y_{t}=a_1 Y_{t-1}+ a_2 Y_{t-2}+\ldots+a_p Y_{t-p}+W_t,
\end{align}
where the noise $W_t \in \mathbb R$ is zero-mean Gaussian with variance~$\sigma_W^2$.  Let $V_t=Y_t+N_t$ be the noisy observation of $Y_t$, where $N_t \in \mathbb R$ is zero-mean Gaussian with variance $\sigma_N^2$.

In this experiment, we set $\sigma_W^2=0.01$, $\sigma_N^2=0.001$ and construct an AR($10$) process with coefficients $a_{2}=0.05$ and $a_{10}=0.9$. The remaining coefficients of the AR($10$) process are zero. We consider a quadratic loss function ${L_{2}(y,\hat{y})=\lVert y-\hat{y} \rVert_{2}^{2}}$. The goal is to infer the target signal $Y_t$ by using the data packet $X_{t-\delta}^l=(V_{t-\delta}, V_{t-\delta-1}, \ldots, V_{t-\delta-l+1})$. A linear MMSE estimator achieves the optimal inference error since $X_{t-\delta}^l$ and~$Y_t$ are jointly Gaussian, and the loss function is quadratic. Fig.~\ref{AR_inferr} shows the inference error for AoI values ${\delta=1, 2, \ldots, 50}$ and packet lengths $l=1, 2, 4, 6, 8, 10$.

Using this AR process, we illustrate the performance of the optimal policy derived for the scheduling problem specified in~\eqref{Problem 1: Inner opt problem}-\eqref{Problem 1: Outer opt problem} under time-invariant packet length selection. For this experiment, we set the maximum allowable packet length to~$l = 10$ and the buffer size to $B = 75$.

Let the number of network delay states be $N = 2$, implying that the delay state $C_i$ in the $i$-th epoch follows a two-state Markov chain with transition probabilities $\{p_{jk}\}_{j,k \in \{1,2\}}$. During the simulations, we assume $p_{12} = p_{21}$, ensuring that each delay state has a steady-state probability of $\frac{1}{2}$. 

We assume a simple transmission delay model that scales with the packet length and the current network conditions. The transmission delay $T_i(l)$ for each delay state is given by
\begin{align}\label{Transmit_Delay}
T_i(l) =
\begin{cases}
\lceil \sigma l \rceil, & \text{if } c_i = 1, \\
\lceil 5 \sigma l \rceil, & \text{if } c_i = 2,
\end{cases}
\end{align}
where $\sigma$ is a scaling parameter. On the other hand, the corresponding feedback delay $F_i$ is given by
\begin{align}\label{Feedback_Delay}
F_i =
\begin{cases}
1, & \text{if } c_i = 1, \\
3, & \text{if } c_i = 2.
\end{cases}
\end{align}

We control the delay memory by varying the sum of transition probabilities $\alpha = p_{12} + p_{21}$. When $\alpha = 1$, i.e., ${p_{12} = p_{21} = \frac{1}{2}}$, both delay states are equally likely in the $i$-th epoch, independent of the delay state in the $(i-1)$-th epoch. Consequently, the delay distribution is i.i.d.. However, as $\alpha$ deviates from $1$, the delay memory increases, and the delay state in the $(i-1)$-th epoch becomes more informative about the delay state in the $i$-th epoch.

Fig.~\ref{AR_perf} presents the time-average inference error achieved by the following three scheduling policies as the parameter $\sigma$ changes from $0$ to $2$ with $\alpha = \frac{1}{20}$:
\begin{itemize}
    \item[1.] Optimal policy derived for the scheduling problem \eqref{Problem 1: Inner opt problem}-\eqref{Problem 1: Outer opt problem} under time-invariant packet length selection.
    \item[2.] Policy outlined in Theorem \ref{Theorem 1} with $l=1$.
    \item[3.] Generate-at-will, zero-wait, $l=1$ policy: $(f, \ell, g)$ such that $f = g = (0, 0, \ldots)$ and $\ell = (1, 1, \ldots)$.
\end{itemize}

The result presented in Fig.~\ref{AR_perf} highlights that the optimal policy achieves a time-average inference error up to six-times lower than that of the generate-at-will, zero-wait, $l = 1$ policy. Part of this improvement can be attained using a packet length of $l = 1$ with the waiting times and buffer positions determined as described in Theorem \ref{Theorem 1}. However, a substantial portion of the performance gain comes from selecting the appropriate packet length, accounting for the interplay between the transmission delay statistics and the packet length $l$.

\begin{figure}[t]
\centerline{\includegraphics[width=.50\textwidth]{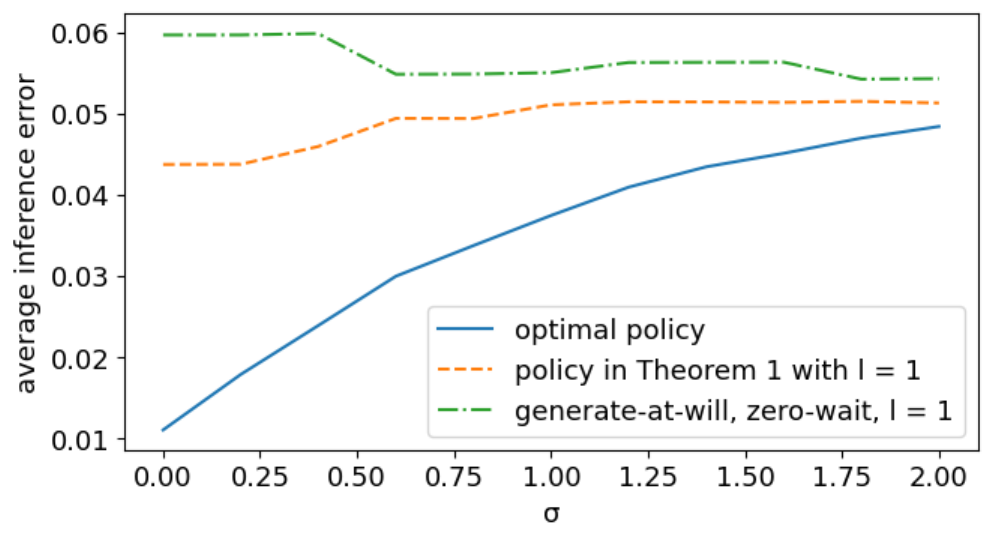}}
\caption{Time-average inference error vs. the scaling parameter~$\sigma$ of transmission delay $T_i(l)$.}
\label{AR_perf}
\end{figure}

Fig.~\eqref{AR_memory} presents the time-average inference error achieved by the following two scheduling policies as the parameter $\alpha$ varies from $0$ to $2$ with $\sigma = \frac{5}{2}$:
\begin{itemize}
    \item[1.] Policy outlined in Theorem \ref{Theorem 1} with $l=5$. This policy accounts for the delay memory.
    \item[2.] Policy outlined in \cite[Theorem 1]{shisher2023learning} with $l=5$. This policy assumes that both delay states are equally likely at each epoch, independent of the history, for any $\alpha$.
\end{itemize}

The result in Fig.~\eqref{AR_memory} emphasizes the importance of accounting for delay memory when designing scheduling algorithms. The performance of both policies is quite similar when the delay state of the network is almost i.i.d., i.e., when $\alpha$ is around~$1$. However, as $\alpha$ deviates from this value, the delay memory increases, and the advantage of the policy in Theorem~\ref{Theorem 1} becomes apparent, with a significant performance gain of up to $11.6\%$.

\begin{figure}[t]
\centerline{\includegraphics[width=.50\textwidth]{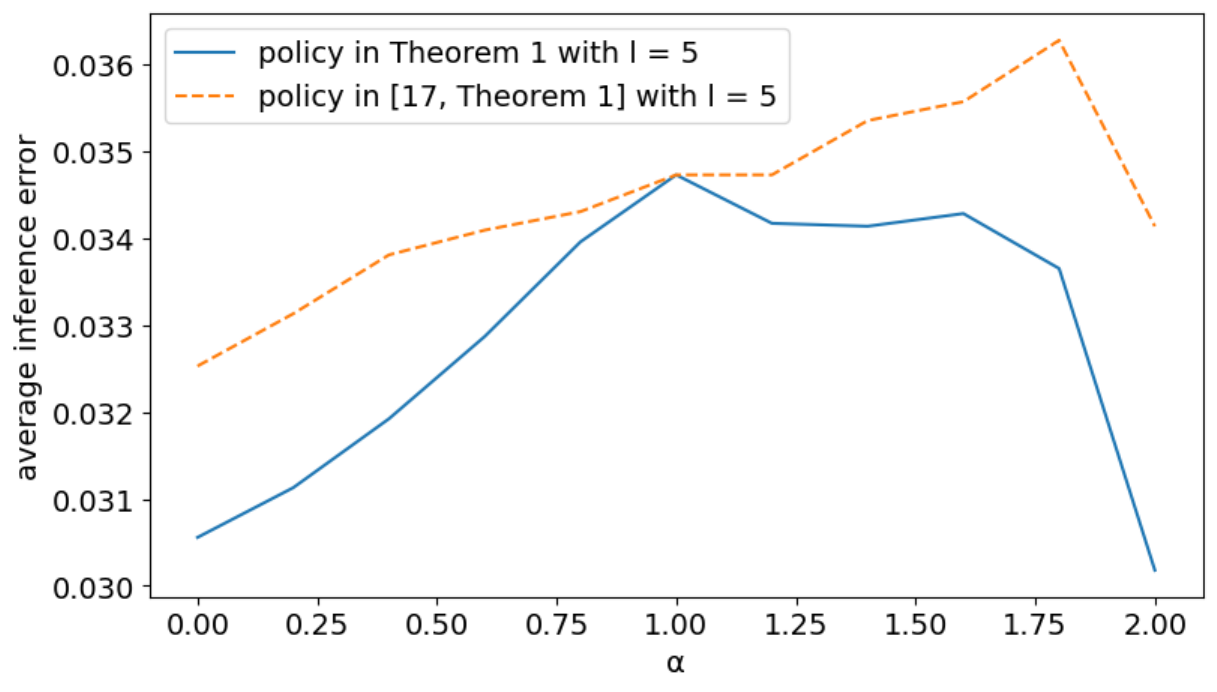}}
\caption{Time-average inference error vs. the sum of transition probabilities $\alpha = p_{12} + p_{21}$.}
\label{AR_memory}
\end{figure}

\subsection{Trace-Driven Evaluation}

\begin{figure}[t]
\centerline{\includegraphics[width=.50\textwidth]{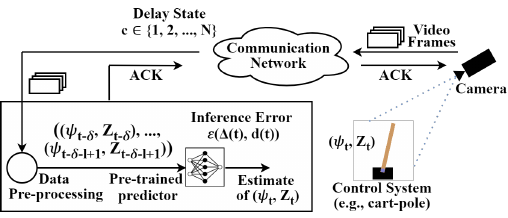}}
\caption{Structure of the cart-pole experiment.}
\label{CP_system_model}
\end{figure}

Ensuring the safe operation of robotic systems often requires evaluating their state based on observations from remote sources. This makes the prediction problem critical for monitoring the safety of such systems. We consider a specific scenario, illustrated in Fig.~\ref{CP_system_model}, where a camera remotely monitors a control system, e.g., a cart-pole. The camera captures video frames and transmits them to a receiver over a network with distinct delay states. Due to the high dimensionality of the video data, transmitting the sequence of frames takes multiple time slots. As a result, the most recently delivered sequence of video frames is generated $\delta$ time slots ago. Upon reception, the frames undergo data pre-processing to extract the cart position and the pole angle information, ${X_{t-\delta} = \big((\psi_{t-\delta}, Z_{t-\delta}), \ldots, (\psi_{t-\delta-l+1}, Z_{t-\delta-l+1})\big)}$. Subsequently, a predictor uses the most recent available information to estimate the current pole angle and cart position~${Y_t = (\psi_t, Z_t)}$.

\begin{figure}[t]
\centerline{\includegraphics[width=.50\textwidth]{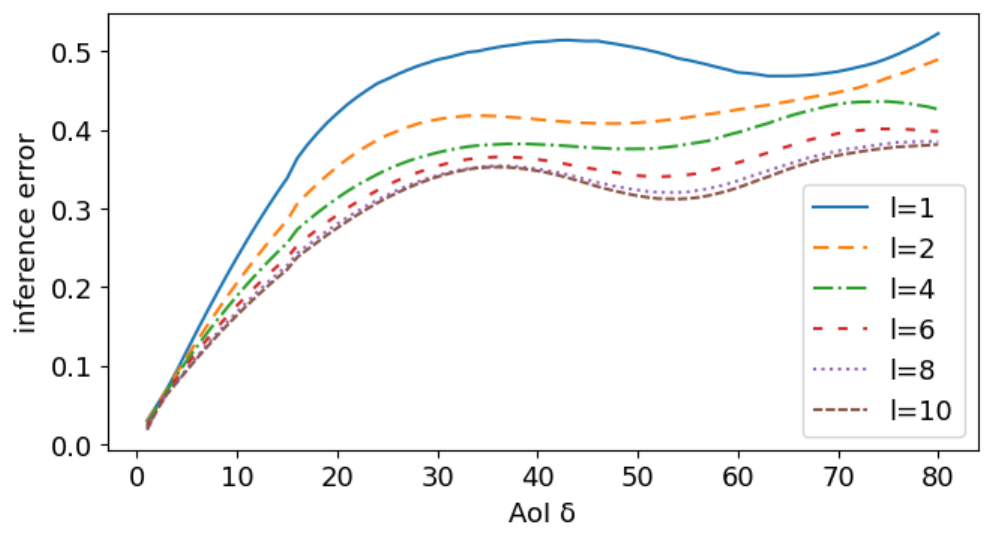}}
\caption{Inference error of the cart-pole experiment for AoI values $\delta = 1, 2, \ldots, 80$ and packet lengths ${l =1, 2, 4, 6, 8, 10}$.}
\label{CP_inferr}
\end{figure}

We consider the OpenAI CartPole-v1 task~\cite{brockman2016openai}, where a Deep Q-Network (DQN) reinforcement learning algorithm~\cite{mnih2015human} is used to control the force applied to a cart to prevent the attached pole from falling over. A pre-trained DQN neural network from \cite{shisher2024timely} is employed for this control task. A time-series dataset containing the pole angle~$\psi_t$ and cart position~$Z_t$ is obtained from \cite{shisher2024timely}. This dataset was generated by simulating 10,000 episodes of the OpenAI CartPole-v1 environment using the DQN controller.

The predictor in this experiment is a Long Short-Term Memory (LSTM) neural network consisting of one input layer, one hidden layer with 64 LSTM cells, and a fully connected output layer. The first 72\% of the dataset is used for training, and the remaining data is used for inference. The performance of the trained LSTM predictor is evaluated on the inference~dataset. The inference error is shown in Fig.~\ref{CP_inferr} for AoI values $\delta = 1, 2, \ldots, 80$ and packet lengths $l = 1, 2, 4, 6, 8, 10$. We observe that the inference error is not a monotonic function of AoI $\delta$ for a given packet length~$l$. Moreover, the inference error decreases as the packet length $l$ increases for a fixed AoI value~$\delta$.

Using this cart-pole experiment, we illustrate the performance of the optimal policy in Theorem \ref{Theorem 2}, which is derived for the scheduling problem \eqref{Problem 3} under time-variable packet length selection.

We consider a slightly more diverse network environment in this experiment by assuming three network delay states, i.e., $N = 3$. Accordingly, the delay state $C_i$ in the $i$-th epoch follows a three-state Markov chain with transition probabilities $\{p_{jk}\}_{j,k \in \{1,2,3\}}$. We set the self-transition probabilities to $p_{11} = p_{22} = p_{33} = 1 - \alpha$, and all remaining transition probabilities to $\frac{\alpha}{2}$. Due to the symmetry of the Markov chain, the steady-state probability of each delay state is $\frac{1}{3}$. Throughout the simulation, we set $\alpha = \frac{1}{50}$.

The transmission delay corresponding to each state has two components: (i) a constant $10$ time slot component, and (ii) a component that scales depending on the packet length and the network delay condition. For example, the first component may represent the unavoidable, relatively large propagation delay in non-terrestrial networks. The other part may represent the varying levels of store-and-forward operation of the satellites, as well as the increase in the transmission delay due to packet lengths. As a result, the transmission delay $T_i(l)$ for each delay state is given by
\begin{align}\label{Transmit_Delay_2}
T_i(l) = 10+
\begin{cases}
\lceil \sigma l \rceil, & \text{if } c_i = 1, \\
\lceil 2\sigma l \rceil, & \text{if } c_i = 2, \\
\lceil 3\sigma l \rceil, & \text{if } c_i = 3,
\end{cases}
\end{align} where $\sigma$ is a scaling parameter. On the other hand, the feedback delay $F_i$ for each delay state is given by
\begin{align}\label{Transmit_Delay_2}
F_i =
\begin{cases}
1, & \text{if } c_i = 1, \\
2, & \text{if } c_i = 2, \\
3, & \text{if } c_i = 3.
\end{cases}
\end{align}

\begin{figure}[t]
\begin{minipage}{.24\textwidth}
    \subfloat[$\sigma=\frac{1}{5}$]{\includegraphics[width=\textwidth]{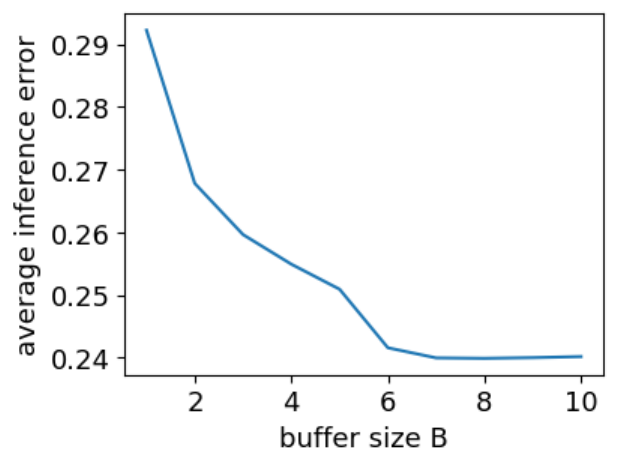}\label{CP_BS_02}}
\end{minipage}
\hfill    
\begin{minipage}{.24\textwidth}
    \subfloat[$\sigma=\frac{9}{5}$]{\includegraphics[width=\textwidth]{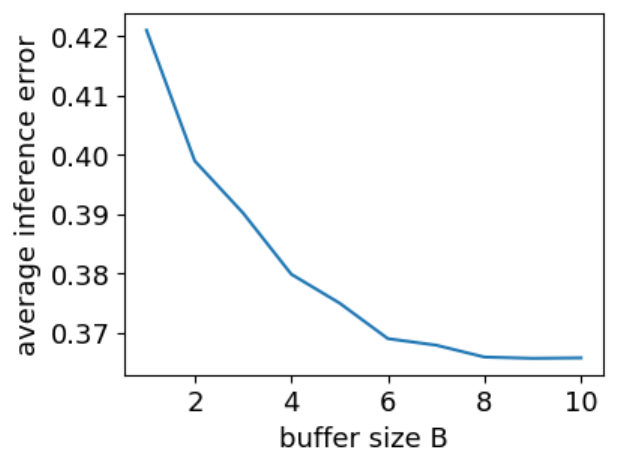}\label{CP_BS_18}}
\end{minipage}
    \caption{Time-average inference error vs. the buffer size B for ${\sigma=\frac{1}{5}}$ and ${\sigma=\frac{9}{5}}$.}\label{CP_BS}
\end{figure}

\begin{figure}[t]
\centerline{\includegraphics[width=.50\textwidth]{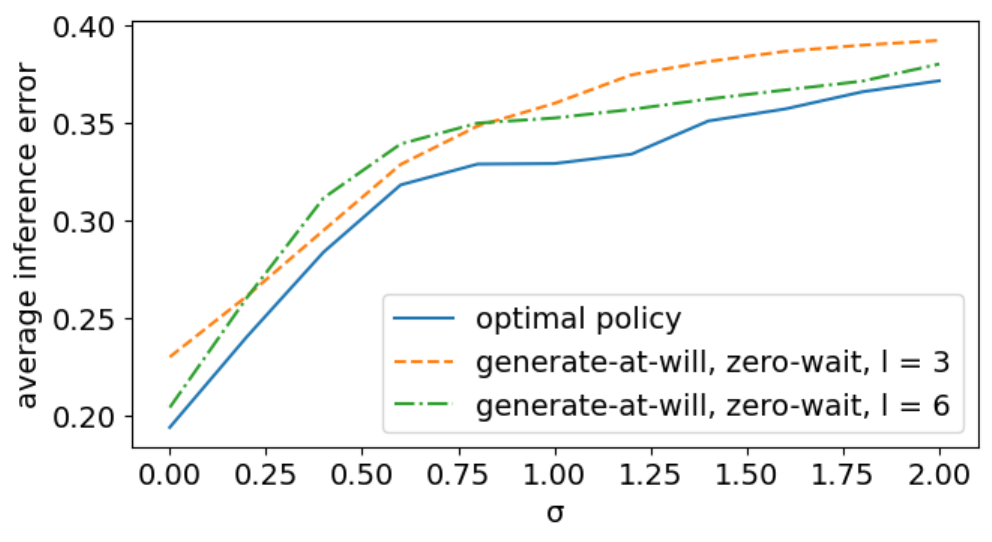}}
\caption{Time-average inference error vs. the scaling parameter $\sigma$ of transmission delay $T_i(l)$.}
\label{CP_perf}
\end{figure}

Fig.~\ref{CP_BS}(a) and Fig.~\ref{CP_BS}(b) show the time-average inference error achieved by the optimal policy in Theorem \ref{Theorem 2} for buffer size values $B$, ranging from $1$ to $10$, with ${\sigma=\frac{1}{5}}$ and ${\sigma=\frac{9}{5}}$, respectively. As the buffer size increases, the performance improvement diminishes and becomes less significant. Therefore, we fix the buffer size $B$ at $10$ for this experiment.

Fig.~\ref{CP_perf} presents the time-average inference error achieved by the following three scheduling policies as the parameter $\sigma$ changes from $0$ to $2$:
\begin{itemize}
    \item[1.] Optimal policy in Theorem \ref{Theorem 2}
    \item[2.] Generate-at-will, zero-wait, $l = 3$ policy: $(f, \ell, g)$ such that $f = g = (0, 0, \ldots)$ and $\ell = (3, 3, \ldots)$.
    \item[3.] Generate-at-will, zero-wait, $l = 6$ policy: $(f, \ell, g)$ such that $f = g = (0, 0, \ldots)$ and $\ell = (6, 6, \ldots)$.
\end{itemize}

The result in Fig.~\ref{CP_perf} highlights the importance of adjusting the packet length over time in response to varying delay conditions. The optimal policy in Theorem \ref{Theorem 2} outperforms the other two policies, which use constant packet length, achieving up to $15.7\%$ reduction in average inference error.

\begin{table}[t]
\centering
    \caption{Normalized time complexity comparison for solving the original Bellman optimality equation (Orig.~BE)~\eqref{Bellman Equation for problem 3} and the simplified Bellman optimality equation (Simp.~BE)~\eqref{Simplified Bellman Equation for problem 3}, both with and without the pre-computed $\gamma(\cdot)$ in~\eqref{index_function_const_packet_length}, using the policy iteration algorithm, for buffer sizes $B \in \{1,5,10\}$.}
\label{tab:complexity_comparison}
\footnotesize
\setlength{\tabcolsep}{6pt}
\renewcommand{\arraystretch}{1.2}
\begin{tabular}{lccc}
\hline
 & \multicolumn{3}{c}{\textbf{Buffer size $B$}} \\
\cline{2-4}
\textbf{Bellman equation} & $B=1$ & $B=5$ & $B=10$ \\
\hline
Orig.~BE 
& $\mathcal{T}_{1}$ & $\mathcal{T}_{5}$ & $\mathcal{T}_{10}$ \\
\hline
Simp.~BE 
& $\approx \mathcal{T}_{1}$ & $0.8556\mathcal{T}_{5}$ & $0.4527\mathcal{T}_{10}$ \\
\hline
Simp.~BE + pre-computed $\gamma(\cdot)$ 
& $\approx \mathcal{T}_{1}$ & $0.8531\mathcal{T}_{5}$ & $0.4464\mathcal{T}_{10}$ \\
\hline
\end{tabular}
\end{table}

We finally present an empirical comparison of the normalized time complexity for solving the original Bellman optimality equation~\eqref{Bellman Equation for problem 3} and the simplified Bellman optimality equation~\eqref{Simplified Bellman Equation for problem 3}, provided in Section~\ref{sec_4}, using the policy iteration algorithm. For the simplified Bellman optimality equation~\eqref{Simplified Bellman Equation for problem 3}, the time complexity is reported for two cases: when the index function $\gamma(\cdot)$ in~\eqref{index_function_const_packet_length} is pre-computed and when it is not initially available. The results are summarized in Table~\ref{tab:complexity_comparison} for buffer sizes $B \in \{1,5,10\}$. Let $\mathcal{T}_k$ denote the time required to solve the original Bellman equation for buffer size~$B=k$. The corresponding solution times for the simplified Bellman equation, both with and without pre-computed~$\gamma(\cdot)$, are expressed relative to $\mathcal{T}_k$.

For a small buffer size of $B = 1$, the time complexities of solving the two equations are nearly identical. This is because, when $B = 1$, the action space is extremely limited: the only possible choice for the buffer position action $b_{i+1}$ at decision time $A_i$ is zero, and the only possible packet length action $l_{i+1}$ is one. Consequently, the reduction in complexity of the policy improvement step of the policy iteration algorithm is not observable. In contrast, as the buffer size increases, solving the simplified Bellman equation has notably lower time complexity than solving the original Bellman optimality equation, achieving more than a two-fold reduction at buffer size $B = 10$. Moreover, the computation of the index function $\gamma(\cdot)$, which is required for solving the simplified Bellman equation, incurs only a small computational overhead.

\section{Conclusion}
This paper studies a goal-oriented communication design problem for remote inference, where an intelligent neural network model on the receiver side predicts the real-time value of a target signal using data packets transmitted from a remote location. We derive two optimal scheduling policies under time-invariant and time-variable packet length selection. These policies minimize the expected time-average inference error for a possibly non-monotonic inference error function by considering the interplay between packet length and transmission delay, as well as by exploiting delay memory. Finally, through the simulations, we demonstrate that our goal-oriented scheduler drops inference error down to one-sixth with respect to age-based scheduling of unit-length packets.

\section*{Acknowledgment}
The authors would like to thank Batu Saatci for his assistance with the simulations.

\bibliographystyle{IEEEtran}
\bibliography{JS}

\begin{IEEEbiography}[{\includegraphics[width=1in,height=1.25in,clip,keepaspectratio]{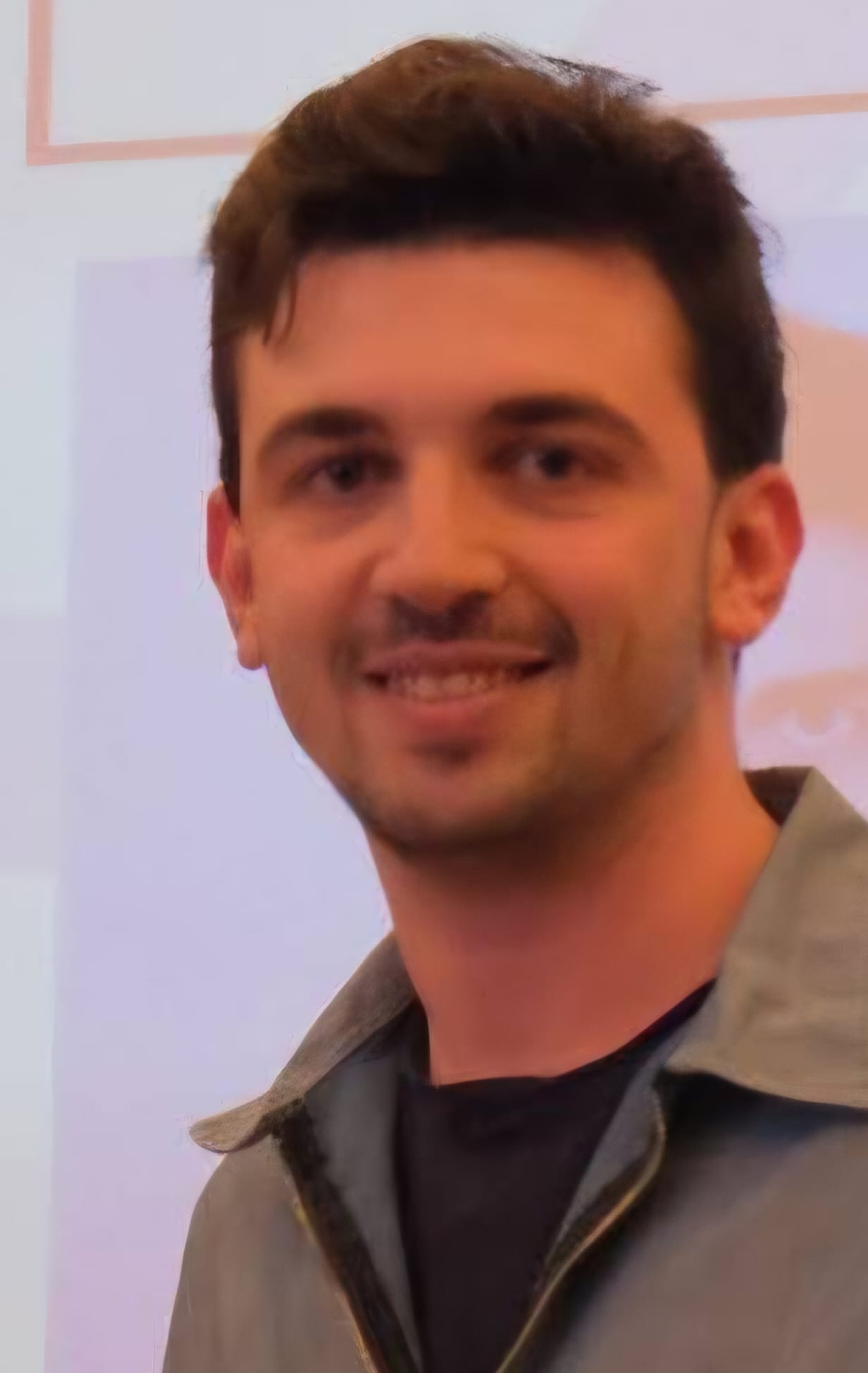}}]{Çağrı Arı} received the B.S. degree and the M.S. degree in Electrical and Electronics Engineering from Middle East Technical University (METU), Ankara, Turkiye, in 2023 and 2025, respectively. He is currently pursuing a PhD degree in the same department.  His research interests include Goal-oriented Communications, Random Access, and Age of Information. He has served as a reviewer for IEEE TIT and IEEE TCOM.
\end{IEEEbiography}

\begin{IEEEbiography}[{\includegraphics[width=1in,height=1.25in,clip,keepaspectratio]{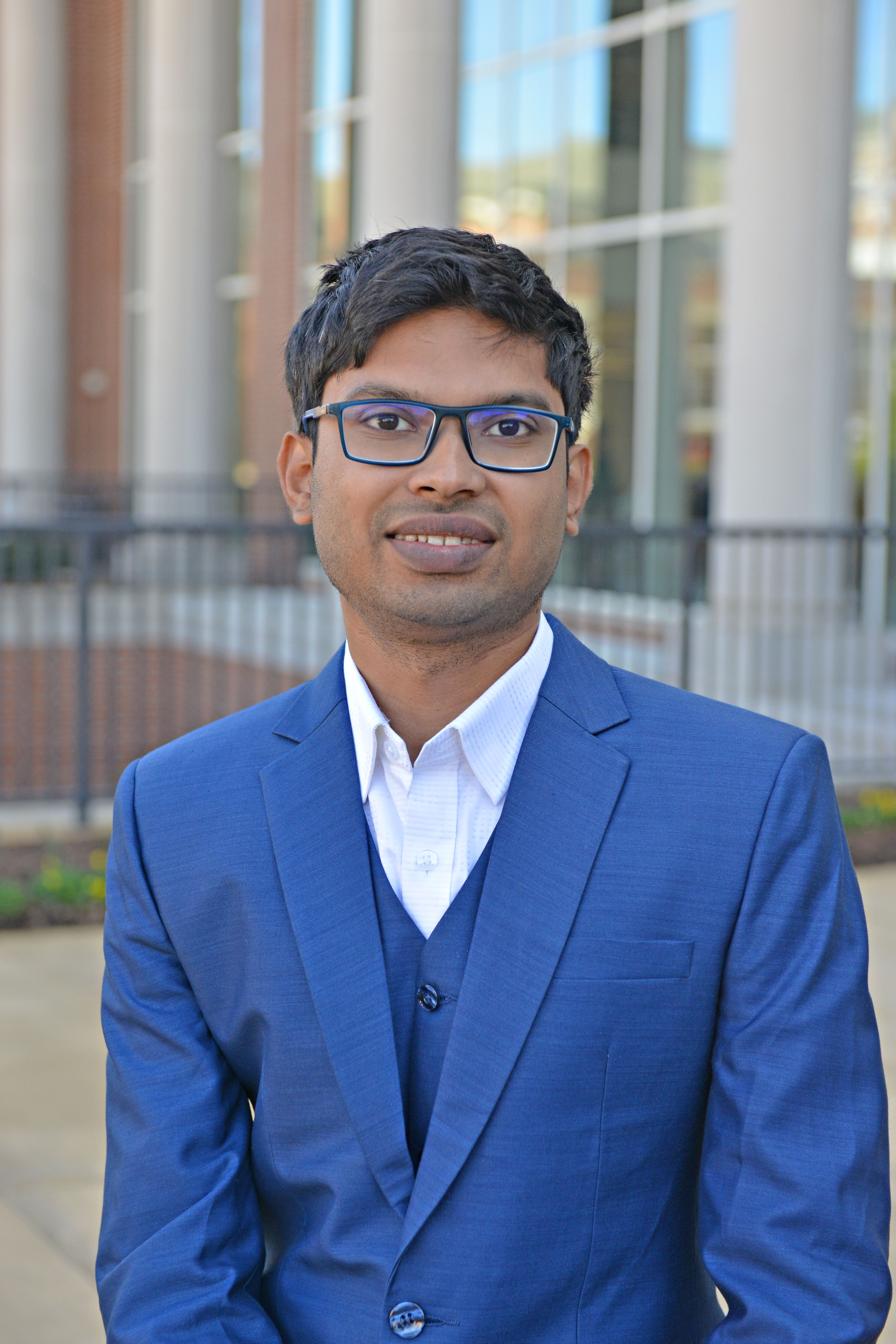}}]{Md Kamran Chowdhury Shisher}(Member, IEEE) received Ph.D. in Electrical Engineering from Auburn University, AL, USA, in 2024. He is currently a Postdoctoral Researcher with the Department of Electrical and Computer Engineering, Purdue University, West Lafayette, IN, USA. His research focuses on the stochastic optimization and sequential decision-making problems in Communication Networks, Machine Learning, and Multi-agent Networked Control systems. He has received the IEEE Communications Society William R. Bennett Prize in 2025.
\end{IEEEbiography}

\begin{IEEEbiography}[{\includegraphics[width=1in,height=1.25in,clip,keepaspectratio]{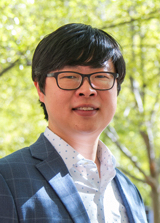}}]{Yin Sun} (Senior Member, IEEE) is the Bryghte D. and Patricia M. Godbold Endowed Associate Professor in the Department of Electrical and Computer Engineering at Auburn University, Alabama. He received his B.Eng. and Ph.D. degrees in Electronic Engineering from Tsinghua University in 2006 and 2011, respectively.
From 2011 to 2017, he was a Postdoctoral Scholar and Research Associate at The Ohio State University. He joined Auburn University in 2017 as an Assistant Professor and was promoted to Associate Professor in 2023. His research interests include Semantic and Goal-oriented Communications, Wireless Networks, and Applied Artificial Intelligence in Agriculture. Dr. Sun has served on the editorial boards of the \emph{IEEE/ACM Transactions on Networking}, \emph{IEEE Transactions on Information Theory}, \emph{IEEE Transactions on Network Science and Engineering}, \emph{IEEE Transactions on Green Communications and Networking}, and the \emph{Journal of Communications and Networks}. He has also served on the organizing committees of numerous international conferences, including as Technical Program Committee Chair for ACM MobiHoc 2025 and General Chair for IEEE/IFIP WiOpt 2026. He founded the Age of Information (AoI) Workshop in 2018 and the Modeling and Optimization in Semantic Communications (MOSC) Workshop in 2023. His publications have received multiple recognitions, including the Best Student Paper Award at IEEE/IFIP WiOpt 2013, the Best Paper Award at IEEE/IFIP WiOpt 2019, runner-up for the Best Paper Award at ACM MobiHoc 2020, the Best Paper Award from the \emph{Journal of Communications and Networks} in 2021, and the IEEE Communications Society William R. Bennett Prize in 2025. He received the Auburn Author Award in 2020 and the National Science Foundation (NSF) CAREER Award in 2023. 
\end{IEEEbiography}

\begin{IEEEbiography}[{\includegraphics[width=1in,height=1.25in,clip,keepaspectratio]{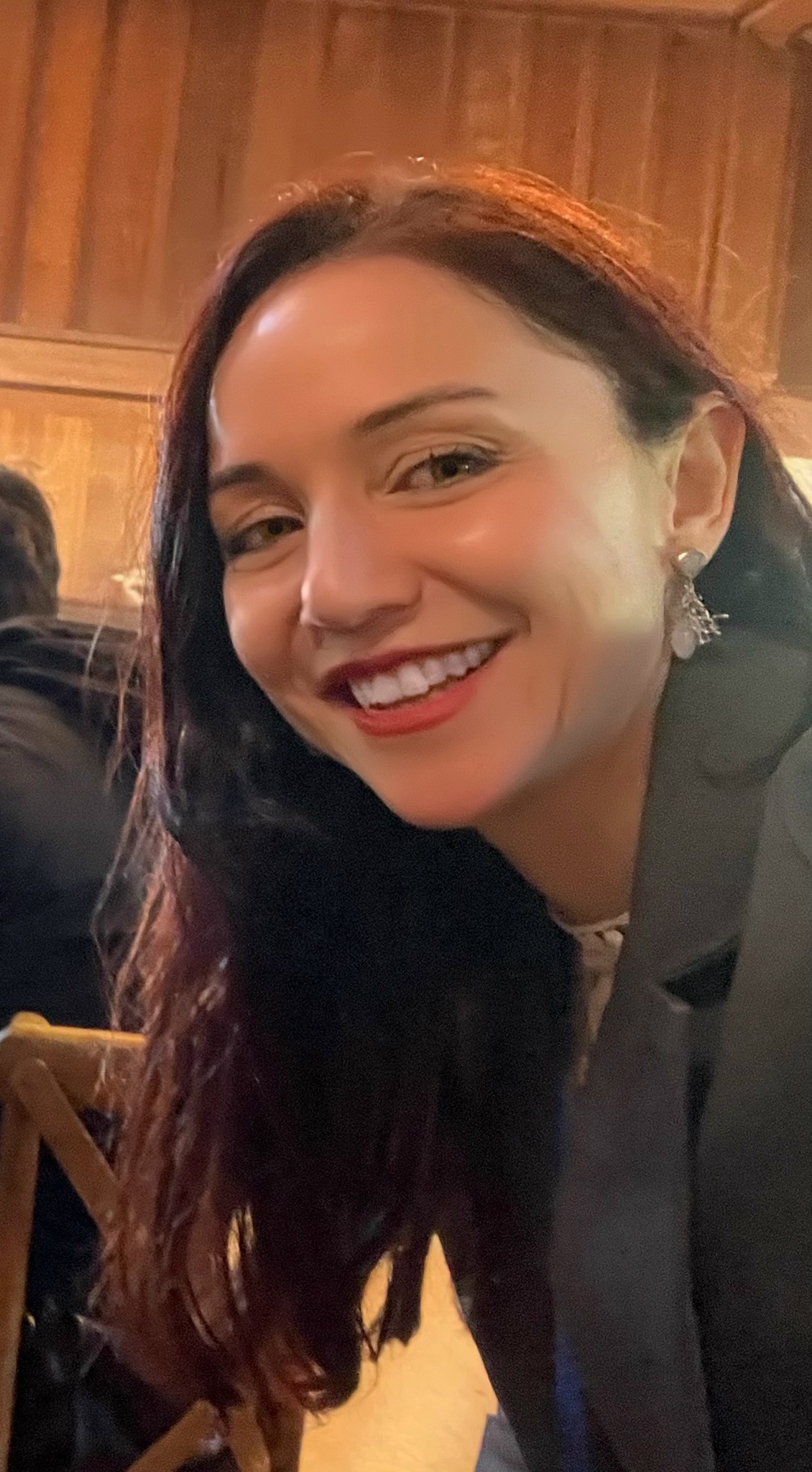}}]{Elif Uysal} (Fellow, IEEE) is a Professor of Electrical and Electronics Engineering at the Middle East Technical University (METU), in Ankara, Turkey. She received the Ph.D. degree from Stanford University in 2003, the S.M. degree from the Massachusetts Institute of Technology (MIT) in 1999, and the B.S. degree in 1997 from METU. She was elected IEEE Fellow for “pioneering contributions to energy-efficient and low-latency communications” and Fellow of the Artificial Intelligence Industry Alliance in 2022. She was a co-recipient of the IEEE INFOCOM2026 Test of Time Paper Award and the 2025 Selcuk Yasar Award. Dr. Uysal received an ERC Advanced Grant in 2024 for her project GO SPACE (Goal Oriented Networking for Space), a TUBITAK BIDEB National Pioneer Researcher Grant (2020), and the Science Academy of Turkey Young Scientist Award (2014). She has chaired the METU Parlar Foundation for Education and Research since 2022. She is the founder of FRESHDATA Technologies (2022). She has served as associate editor for IEEE Trans. on Wireless Communications, IEEE Trans. on Networking, IEEE Trans. on Information Theory, and Guest Editor at IEEE J. Special Areas in Information Theory. 
\end{IEEEbiography}

\appendices
\section{} \label{App_Comp_SMDP_1}
The SMDP corresponding to problem \eqref{Problem 1: Inner opt problem}, constructed in Section \ref{SMDP_1}, is described by the following components \cite{bertsekasdynamic}:

\textbf{(i) Decision Time:} Each ACK reception time slot $A_i = S_i + T_i(l) + F_i$ is a decision time.

\textbf{(ii) Action:} At each decision time $A_i$, the scheduler determines the waiting time $\tau_{i+1}$ and buffer position $b_{i+1}$ for the next packet to be transmitted.

\textbf{(iii) State and State Transitions:} The system state at each decision time $A_i$ is represented by the tuple $\big(\delta, c_i\big)$, where $\delta = \Delta(A_i)$ and $c_i$ is the delay state in the $i$-th epoch. The AoI $\Delta(t)$ at time slot $t$ evolves according to equation \eqref{AoI Evolution}. Meanwhile, the delay state $C_i = c_i$ in the $i$-th epoch evolves according to the finite-state ergodic Markov chain defined in Section \ref{sys_model}. The Markov chain makes a single transition at each decision time $A_i$ and none otherwise.

\textbf{(iv) Transition Time and Cost:} The time between the consecutive decision times $A_{i+1} - A_i$ is represented by the random variable $\tau_{i+1} + T_{i+1}(l) + F_{i+1}$, given the delay state $C_i = c_i$. Moreover, the cost incurred while transitioning from decision time $A_i$ to decision time $A_{i+1}$,
$$\sum_{t=A_i}^{A_{i+1}-1}\varepsilon\big(\Delta(t),l\big),$$ is represented by the random variable
$$\sum_{k=0}^{\tau_{i+1}+T_{i+1}(l)-1} \varepsilon\big(\Delta(A_i)+k,l\big)+\sum_{k=0}^{F_{i+1}-1} \varepsilon\big(b_{i+1}+T_{i+1}(l)+k,l\big),$$ given the delay state $C_i = c_i$.

\section{Proof of Theorem \ref{Theorem 1}} \label{App_Proof_TH1}
This proof consists of three steps.

\textbf{Step 1:} We prove two structural results regarding the optimal scheduling policy by using the Bellman optimality equation \eqref{Bellman Equation for fixed packet length}, which comprises three terms. The first term,
\begin{align*}
\mathbb{E} \Bigg[ \sum_{k=0}^{\tau_{i+1} + T_{i+1}(l) - 1} \bigg(\varepsilon\big(\delta + k, l\big) - \varepsilon_{l, \text{opt}}\bigg) \Bigg| C_i = c_i \Bigg],
\end{align*}
is independent of the buffer position $b_{i+1}$. In contrast, the remaining two terms,
\begin{align*}
\mathbb{E} \Bigg[ \sum_{k=0}^{F_{i+1} - 1} \bigg(\varepsilon\big(b_{i+1} + T_{i+1}(l) + k, l\big) - \varepsilon_{l, \text{opt}}\bigg) \Bigg| C_i = c_i \Bigg]
\end{align*}
and
\begin{align*}
\mathbb{E} \bigg[ h_1\big(b_{i+1} + T_{i+1}(l) + F_{i+1}, C_{i+1}\big) \bigg| C_i = c_i \bigg],
\end{align*}
are independent of the waiting time $\tau_{i+1}$. Consequently, the optimal buffer position $b_{i+1}^{*}$ and the optimal waiting time $\tau_{i+1}^*$ can be determined by solving separate optimization problems.

\textbf{(i) Optimal Waiting Time:} The optimal waiting time $\tau_{i+1}^*$, when $\Delta(A_i)=\delta$ and $C_i=c_i$, can be determined by solving the optimization problem
\begin{align}\label{Optimal waiting time}
\min_{\tau_{i+1} \in \{0, 1, \ldots\}}\mathbb E \Bigg[ \sum_{k=0}^{\tau_{i+1}+T_{i+1}(l)-1} \bigg(\varepsilon\big(\delta+k,l\big) -\varepsilon_{l, \text{opt}}\bigg) \Bigg| C_i=c_i\Bigg].
\end{align}

Problem \eqref{Optimal waiting time} is a simple integer optimization problem. The optimal waiting time $\tau_{i+1}^*=0$ if 
\begin{align}\label{Opt_W_time_S1}
\min_{\tau_{i+1} \in \{1, 2, \ldots\}}&\mathbb E \Bigg[ \sum_{k=0}^{\tau_{i+1}+T_{i+1}(l)-1} \bigg(\varepsilon\big(\delta+k,l\big) -\varepsilon_{l, \text{opt}}\bigg) \Bigg| C_i=c_i\Bigg] \geq \nonumber\\ &\mathbb E \Bigg[ \sum_{k=0}^{T_{i+1}(l)-1} \bigg(\varepsilon\big(\delta+k,l\big) -\varepsilon_{l, \text{opt}}\bigg) \Bigg| C_i=c_i\Bigg].
\end{align}

By using the linearity of expectation, we can obtain
\begin{align}\label{Opt_W_time_S2}
&\min_{\tau_{i+1} \in \{1, 2, \ldots\}} \mathbb E \Bigg[ \sum_{k=0}^{\tau_{i+1}-1} \bigg(\varepsilon\big(\delta+T_{i+1}(l)+k,l\big) -\varepsilon_{l, \text{opt}}\bigg) \Bigg| C_i=c_i\Bigg] \nonumber \\
&\geq 0.
\end{align}

The inequality \eqref{Opt_W_time_S2} holds if and only if
\begin{align}\label{Opt_W_time_S3}
\min_{\tau_{i+1} \in \{1, 2, \ldots\}} \frac{1}{\tau_{i+1}}  \sum_{k=0}^{\tau_{i+1}-1} \mathbb{E} \bigg[\varepsilon\big(\delta+T_{i+1}(l)+k,l\big)\bigg| C_i=c_i \bigg] \geq \varepsilon_{l, \text{opt}}.
\end{align}

The left-hand side of \eqref{Opt_W_time_S3} is equal to \(\gamma\big(\delta, l, l, c_i\big)\), where $\gamma\big(\cdot\big)$ is the index function defined in \eqref{index_function_const_packet_length}. Similarly, the optimal waiting time $\tau_{i+1}^*=1$, if $\tau_{i+1}^*\neq 0$ and 
\begin{align}\label{Opt_W_time_S4}
&\min_{\tau_{i+1} \in \{2, 3, \ldots\}}\mathbb E \Bigg[ \sum_{k=1}^{\tau_{i+1}-1} \bigg(\varepsilon\big(\delta+T_{i+1}(l)+k,l\big) -\varepsilon_{l, \text{opt}}\bigg) \Bigg| C_{i}=c_i\Bigg] \nonumber \\
&\geq 0.
\end{align}

Following the same steps in \eqref{Opt_W_time_S2}-\eqref{Opt_W_time_S3}, we can demonstrate that the inequality \eqref{Opt_W_time_S4} is equivalent to
\begin{align}\label{Opt_W_time_S5}
\gamma\big(\delta+1, l, l, c_i\big)\geq \varepsilon_{l, \text{opt}}.
\end{align}

By repeating \eqref{Opt_W_time_S4}-\eqref{Opt_W_time_S5}, we can establish the rule that the optimal waiting time $\tau_{i+1}^*$ is equal to $k$, provided that \({\tau_{i+1}^* \neq 0, 1, \ldots, k-1}\) and 
\begin{align}
\gamma\big(\delta+k, l, l, c_i\big)\geq \varepsilon_{l, \text{opt}}.
\end{align}

Hence, any \(\tau_{i+1}^*\) in the optimal waiting time sequence ${g^* = \big(\tau_2^*, \tau_3^*, \ldots\big)}$ is determined by the index-based threshold rule \(\tau_l\big(\delta, c_i\big)\) given by equation \eqref{Theorem_1_Wait_Rule}.

\textbf{(ii) Optimal Buffer Position:} The optimal buffer position $b_{i+1}^*$, when $\Delta(A_i) = \delta$ and $C_i = c_i$, can be determined by solving the optimization problem
\begin{align}\label{opt_buff_pos}
&\min_{b_{i+1} \in \{0,1,\ldots,B-l\}} \nonumber\\
&\mathbb{E} \Bigg[ \sum_{k=0}^{F_{i+1} - 1} \bigg(\varepsilon\big(b_{i+1} + T_{i+1}(l) + k, l\big) - \varepsilon_{l, \text{opt}}\bigg) \Bigg| C_i = c_i \Bigg]\nonumber\\
&+\mathbb E\bigg[h_1\big(b_{i+1}+T_{i+1}(l)+F_{i+1}, C_{i+1}\big)\bigg| C_i=c_i\bigg].
\end{align}

From \eqref{opt_buff_pos}, we observe that the optimal buffer position $b_{i+1}^*$ is independent of $\delta = \Delta(A_i)$ and depends only on the delay state $c_i$ in the previous epoch. In other words, $c_i$ is a sufficient statistic for determining $b_{i+1}^*$ for all $i$.

\textbf{Step 2:} We construct a new SMDP based on the results obtained in the first step. In the new SMDP, the decision time is $D_{i+1}$ instead of $A_i$. At each decision time $D_{i+1}$, the system state is represented only by the delay state $c_i$ in the $i$-th epoch. We fix the waiting time decisions by using the optimal threshold rule $\tau_l\big(\delta, c_i\big)$, and the action involves determining the buffer position $b_{i+1}$. A detailed description of this SMDP can be found in Appendix \ref{App_Comp_SMDP_2}. The Bellman optimality equation of this new SMDP is given by equation \eqref{New Bellman Equation for fixed packet length}.

The second term on the right-hand side of equation \eqref{New Bellman Equation for fixed packet length}, $$\mathbb E\bigg[h_1'\big(C_{i+1}\big)\bigg| C_i=c\bigg],$$ is independent of the action $b_{i+1}$ taken at decision time $D_{i+1}$ and depends only on the state $C_i=c_i$. Therefore, the Bellman optimality equation \eqref{New Bellman Equation for fixed packet length} is decomposable and can be solved as a per-decision epoch optimization problem; that is, we can express any $b_{i+1}^*$ in the optimal buffer position sequence ${f^* = \big(b_2^*, b_3^*, \ldots\big)}$ as shown in \eqref{Theorem_1_Buf_Pos_Seq}.

\textbf{Step 3:} We determine the optimal value $\varepsilon_{l, \text{opt}}$ of the inner optimization problem \eqref{Problem 1: Inner opt problem} as the unique root of equation \eqref{Theorem_1_unique_root_of}.  Let $\vartheta_i^*$ denote the random variable $b_{i}^*+T_{i}(l)+F_{i}$. Consider the following equality:
\begin{align} \label{unique_root_step_1}
    &\mathbb{E}\Bigg[\sum_{t=A_{i}}^{A_{i+1}-1} \varepsilon\big(\Delta(t)\big)\Bigg]-\varepsilon_{l, \text{opt}}\mathbb{E}\big[A_{i+1}-A_{i}\big]=\nonumber\\
    &\mathbb E \Bigg[ \sum_{k=0}^{\tau_l\big(\vartheta_i^*, C_i\big)+T_{i+1}(l)-1} \bigg(\varepsilon\big(\vartheta_i^*+k,l\big)-\varepsilon_{l, \text{opt}} \bigg)\Bigg]\nonumber\\
    &+\mathbb E \Bigg[ \sum_{k=0}^{F_{i+1}-1} \bigg(\varepsilon\big(b_{i+1}^*+T_{i+1}(l)+k,l\big)-\varepsilon_{l, \text{opt}} \bigg)\Bigg].
\end{align}

We first demonstrate that the right-hand side of equation \eqref{unique_root_step_1} is equal to zero, which implies \(\varepsilon_{l, \text{opt}}\) is indeed a root of equation \eqref{Theorem_1_unique_root_of}. From the Bellman equation \eqref{Bellman Equation for fixed packet length}, we can obtain the following relation:
\begin{align}\label{unique_root_step_2}
&h_1\big(\vartheta_i^*, C_i\big)= \nonumber\\
&\mathbb E \Bigg[ \sum_{k=0}^{\tau_l\big(\vartheta_i^*, C_i\big)+T_{i+1}(l)-1} \bigg(\varepsilon\big(\vartheta_i^*+k,l\big) -\varepsilon_{l, \text{opt}}\bigg) \Bigg| C_i\Bigg]\nonumber\\
&+\mathbb E \Bigg[ \sum_{k=0}^{F_{i+1}-1} \bigg(\varepsilon\big(b_{i+1}^*+T_{i+1}(l)+k,l\big) -\varepsilon_{l, \text{opt}}\bigg) \Bigg| C_i\Bigg]\nonumber\\
&+\mathbb E\bigg[h_1\big(\vartheta_{i+1}^*, C_{i+1}\big)\bigg| C_i\bigg].
\end{align}

By the law of iterated expectations, taking the expectation over $C_i$ on both sides of \eqref{unique_root_step_2} gives us the equation
\begin{align}\label{unique_root_step_3}
&\mathbb E \big[ h_1\big(\vartheta_i^*, C_i\big)\big]= \nonumber\\
&\mathbb E \Bigg[ \sum_{k=0}^{\tau_l\big(\vartheta_i^*, C_i\big)+T_{i+1}(l)-1} \bigg(\varepsilon\big(\vartheta_i^*+k,l\big) -\varepsilon_{l, \text{opt}}\bigg)\Bigg]\nonumber\\
&+\mathbb E \Bigg[ \sum_{k=0}^{F_{i+1}-1} \bigg(\varepsilon\big(b_{i+1}^*+T_{i+1}(l)+k,l\big) -\varepsilon_{l, \text{opt}}\bigg)\Bigg]\nonumber\\
&+\mathbb E\big[h_1\big(\vartheta_{i+1}^*, C_{i+1}\big)\big].
\end{align}

Since $C_i$ evolves according to an ergodic Markov chain with a unique stationary distribution, it follows that $\mathbb{E} \big[ h_1\big(\vartheta_i^*, C_i\big) \big] = \mathbb{E}\big[h_1\big(\vartheta_{i+1}^*, C_{i+1}\big)\big]$. Therefore, equation \eqref{unique_root_step_3} implies that the right-hand side of equation \eqref{unique_root_step_1} is equal to zero.

Next, we prove the uniqueness of the root \(\varepsilon_{l, \text{opt}}\). The right-hand side of equation \eqref{unique_root_step_1} can be written as
\begin{align} \label{unique_root_step_4}
    &\min_{\tau\big(\cdot\big)} \text{ } \mathbb E \Bigg[ \sum_{k=0}^{\tau\big(\vartheta_i^*, C_i\big)+T_{i+1}(l)-1} \bigg(\varepsilon\big(\vartheta_i^*+k,l\big)-\varepsilon_{l, \text{opt}} \bigg)\Bigg]\nonumber\\
    &+\mathbb E \Bigg[ \sum_{k=0}^{F_{i+1}-1} \bigg(\varepsilon\big(b_{i+1}^*+T_{i+1}(l)+k,l\big)-\varepsilon_{l, \text{opt}} \bigg)\Bigg].
\end{align}

The expression in \eqref{unique_root_step_4} is concave, continuous, and strictly decreasing in \(\varepsilon_{l, \text{opt}}\), as the first term is the functional minimum of linear decreasing functions of \(\varepsilon_{l, \text{opt}}\), and the second term is itself a linear decreasing function of \(\varepsilon_{l, \text{opt}}\). Therefore, since
$$
\lim_{\varepsilon_{l, \text{opt}} \to \infty} \mathbb{E} \Bigg[ \sum_{k=0}^{\tau\big(\vartheta_i^*, C_i\big)+T_{i+1}(l)-1} \bigg(\varepsilon\big(\vartheta_i^*+k, l\big) - \varepsilon_{l, \text{opt}} \bigg) \Bigg] = -\infty
$$
and
$$
\lim_{\varepsilon_{l, \text{opt}} \to -\infty} \mathbb{E} \Bigg[ \sum_{k=0}^{\tau\big(\vartheta_i^*, C_i\big)+T_{i+1}(l)-1} \bigg(\varepsilon\big(\vartheta_i^*+k, l\big) - \varepsilon_{l, \text{opt}} \bigg) \Bigg] = \infty
$$
for any $\tau\big(\cdot\big)$, and
$$
\lim_{\varepsilon_{l, \text{opt}} \to \infty} \mathbb{E} \Bigg[ \sum_{k=0}^{F_{i+1}-1} \bigg(\varepsilon\big(b_{i+1}^* + T_{i+1}(l) + k, l\big) - \varepsilon_{l, \text{opt}} \bigg) \Bigg] = -\infty
$$
and
$$
\lim_{\varepsilon_{l, \text{opt}} \to -\infty} \mathbb{E} \Bigg[ \sum_{k=0}^{F_{i+1}-1} \bigg(\varepsilon\big(b_{i+1}^* + T_{i+1}(l) + k, l\big) - \varepsilon_{l, \text{opt}} \bigg) \Bigg] = \infty,
$$
equation \eqref{Theorem_1_unique_root_of} has a unique root. This completes the proof.

\section{} \label{App_Comp_SMDP_3}
The SMDP corresponding to problem \eqref{Problem 3}, constructed in Section \ref{SMDP_3}, is described by the following components \cite{bertsekasdynamic}:

\textbf{(i) Decision Time:} Each ACK reception time slot $A_i = S_i + T_i(l_i) + F_i$ is a decision time.

\textbf{(ii) Action:} At each decision time $A_i$, the scheduler determines the waiting time $\tau_{i+1}$, packet length $l_{i+1}$, and buffer position $b_{i+1}$ for the next packet to be transmitted.

\textbf{(iii) State and State Transitions:} The system state at each decision time $A_i$ is represented by the tuple $\big(\delta, d, c_i\big)$, where $\delta = \Delta(A_i)$, $d = d(A_i)$, and $c_i$ is the delay state in the $i$-th epoch. The AoI $\Delta(t)$ at time slot $t$ and the length $d(t)$ of the packet most recently delivered by time slot $t$ evolve according to equations \eqref{AoI Evolution} and \eqref{length of the most recently delivered packet}, respectively. Meanwhile, the delay state $C_i = c_i$ in the $i$-th epoch evolves according to the finite-state ergodic Markov chain defined in Section \ref{sys_model}. The Markov chain makes a single transition at each decision time $A_i$ and none otherwise.

\textbf{(iv) Transition Time and Cost:} The time between the consecutive decision times $A_{i+1} - A_i$ is represented by the random variable $\tau_{i+1} + T_{i+1}(l_{i+1}) + F_{i+1}$, given the delay state $C_i = c_i$. Moreover, the cost incurred while transitioning from decision time $A_i$ to decision time $A_{i+1}$,
$$\sum_{t=A_i}^{A_{i+1}-1}\varepsilon\big(\Delta(t),d(t)\big),$$ is represented by the random variable
\begin{align}
    & \sum_{k=0}^{\tau_{i+1}+T_{i+1}(l_{i+1})-1} \varepsilon\big(\Delta(A_i)+k,d(A_i)\big) \nonumber \\
    & +\sum_{k=0}^{F_{i+1}-1} \varepsilon\big(b_{i+1}+T_{i+1}(l_{i+1})+k,l_{i+1}\big), \nonumber
\end{align} given the delay state $C_i = c_i$.

\section{} \label{App_Comp_SMDP_2}
The components of the SMDP constructed in the second step of the proof of Theorem \eqref{Theorem 1} differ from those of the SMDPs constructed in Sections \ref{SMDP_1} and \ref{SMDP_3}.

\textbf{(i) Decision Time:} Each delivery time slot $D_{i+1} = S_{i+1} + T_{i+1}(l)$ is a decision time.

\textbf{(ii) Action:} At each decision time $D_{i+1}$, the scheduler determines the buffer position $b_{i+1}$ for the next packet to be transmitted.

\textbf{(iii) State and State Transitions:} The system state at each decision time $D_{i+1}$ is the delay state $C_i = c_i$ in the $i$-th epoch, which evolves according to the finite-state ergodic Markov chain defined in Section \ref{sys_model}. The Markov chain makes a single transition at each decision time $A_i$ and none otherwise.

\textbf{(iv) Transition Time and Cost:} 
The time between consecutive decision times, $D_{i+1} - D_i$, is represented by the random variable $\vartheta_{i+1} = F_{i+1} + \tau_l\big(b_{i+1} + T_{i+1}(l) + F_{i+1}, C_{i+1}\big) + T_{i+2}(l)$, given the delay state $C_i = c_i$, where $\tau_l\big(\cdot\big)$ is the waiting time rule defined in \eqref{Theorem_1_Wait_Rule}. Moreover, the cost incurred while transitioning from decision time $D_i$ to decision time $D_{i+1}$,
$$\sum_{t=D_i}^{D_{i+1}-1} \varepsilon\big(\Delta(t), l\big),$$
is represented by the random variable
$$\sum_{k=0}^{\vartheta_{i+1}-1} \varepsilon\big(b_{i+1} + T_{i+1}(l) + k, l\big),$$
given the delay state $C_i = c_i$.

\section{Proof of Theorem \ref{Theorem 2}} \label{App_Proof_TH2}
The Bellman optimality equation in \eqref{Bellman Equation for problem 3} has three terms. The first term,
\begin{align*}
\mathbb E \Bigg[ \sum_{k=0}^{\tau_{i+1}+T_{i+1}(l_{i+1})-1} \bigg(\varepsilon\big(\delta+k,d\big) -\varepsilon_{\text{opt}}\bigg) \Bigg| C_i=c_i\Bigg],
\end{align*}
does not depend on the buffer position $b_{i+1}$. On the contrary, the remaining two terms,
\begin{align*}
\mathbb E \Bigg[ \sum_{k=0}^{F_{i+1}-1} \bigg(\varepsilon\big(b_{i+1}+T_{i+1}(l_{i+1})+k,l_{i+1}\big) -\varepsilon_{\text{opt}}\bigg) \Bigg| C_i=c_i\Bigg]
\end{align*}
and
\begin{align*}
\mathbb E\big[h_2\big(b_{i+1}+T_{i+1}(l_{i+1})+F_{i+1}, l_{i+1}, C_{i+1}\big)\big| C_i=c_i\big],
\end{align*}
do not depend on the waiting time $\tau_{i+1}$. Therefore, given the optimal packet length \(l_{i+1}^*\), the optimal waiting time \(\tau_{i+1}^*\), when \(\Delta(A_i) = \delta\), \(d(A_i) = d\), and \(C_i = c_i\), can be determined by solving the following optimization problem, which is independent of the action \(b_{i+1}\):
\begin{align}\label{Optimal waiting time Problem 3}
\min_{\tau_{i+1} \in \{0, 1, \ldots\}}\mathbb E \Bigg[ \sum_{k=0}^{\tau_{i+1}+T_{i+1}(l_{i+1}^*)-1} \bigg(\varepsilon\big(\delta+k,d\big) -\varepsilon_{\text{opt}}\bigg) \Bigg| C_i=c_i\Bigg].
\end{align}

Problem \eqref{Optimal waiting time Problem 3} is a simple integer optimization problem, quite similar to problem \eqref{Optimal waiting time}. By following the steps in \eqref{Opt_W_time_S1} through \eqref{Opt_W_time_S5}, we can show that the optimal waiting time $\tau_{i+1}^*=k$, if $\tau_{i+1}^*\neq 0, 1, \ldots, k-1$ and 
\begin{align}
\gamma(\delta+k, d, l_{i+1}^*, c_i)\geq \varepsilon_{\text{opt}},
\end{align}  where $\gamma\big(\cdot\big)$ is the index function defined in \eqref{index_function_const_packet_length}. Hence, the optimal waiting time \(\tau_{i+1}^*\) is determined as given in equation \eqref{Th3_Wait_Rule}, and the Bellman optimality equation \eqref{Bellman Equation for problem 3} can be equivalently formulated as \eqref{Simplified Bellman Equation for problem 3}. This completes the proof.

\end{document}